\documentclass[11pt]{article}
\usepackage[T1]{fontenc}
\usepackage[letterpaper,margin=1in]{geometry}
\usepackage{lmodern,microtype,amsmath,amssymb,amsthm,mathtools}
\usepackage{enumitem,needspace,booktabs,array}
\usepackage[hidelinks]{hyperref}
\hypersetup{pdftitle={Constant-Probability Witness Isolation\\ Implies $\mathrm{NP}\subseteq\mathrm{P/poly}$},pdfauthor={Sebastian Ben Daniel},pdfsubject={Tagged affine gadgets, constant-success collapse, and logarithmic regional thresholds},bookmarksnumbered=true}
\newtheorem{theorem}{Theorem}[section]
\newtheorem{lemma}[theorem]{Lemma}
\newtheorem{proposition}[theorem]{Proposition}
\newtheorem{corollary}[theorem]{Corollary}
\theoremstyle{definition}
\newtheorem{definition}[theorem]{Definition}
\theoremstyle{remark}
\newtheorem{remark}[theorem]{Remark}
\DeclareMathOperator{\Sol}{Sol}
\DeclareMathOperator{\poly}{poly}

\newcommand{\F}{\mathbb F}
\newcommand{\I}{\mathcal I}
\newcommand{\ind}{\mathbf 1}
\newcommand{\NP}{\mathrm{NP}}
\newcommand{\Ppoly}{\mathrm{P}/\mathrm{poly}}
\newcommand{\USAT}{\mathrm{USAT}}
\newcommand{\Yes}{\mathrm{Yes}}
\newcommand{\No}{\mathrm{No}}
\newcommand{\acc}{\mathrm{acc}}
\numberwithin{equation}{section}
\title{Constant-Probability Witness Isolation\\ Implies $\mathrm{NP}\subseteq\mathrm{P/poly}$}
\author{Sebastian Ben Daniel\\[2pt]\small Ben Gurion University}
\date{23 September 2026}
\begin{document}
\maketitle
\begin{abstract}
Valiant and Vazirani isolate a satisfying assignment of a circuit with
probability $\Omega(1/n)$. Dell, Kabanets, van Melkebeek, and Watanabe
showed that success above $2/3$ implies $\mathrm{NP}\subseteq\mathrm{P/poly}$
and asked about the range in between. We show that every positive constant
already implies the collapse: if a randomized nonuniform polynomial-size
pruning procedure succeeds with probability $\epsilon$ on affine circuit
inputs with at most $2^{\lfloor 2/\epsilon\rfloor}$ satisfying assignments,
then $\mathrm{NP}\subseteq\mathrm{P/poly}$. Success $10/\log L$ on affine
inputs with at most $L^{1/3}$ satisfying assignments suffices, where $L$ is
the description length, and on inputs with one or two satisfying
assignments the threshold $2/3$ drops to $3/5$. No cryptographic assumption
is used, and the procedure may read the entire circuit.

The proof compiles a pool of circuits into one circuit whose satisfying
assignments are indexed by tags in $\mathbb{F}_2^d$. Each member is assigned
an affine region of tag space, and if one member is unsatisfiable, the
satisfying set shrinks to that member's region. Because regions may overlap
and have different dimensions, the collapse reduces to a combinatorial
bound: no set of tags meets more than a $2/d$ fraction of an equally
weighted family of affine subspaces of all dimensions below $d$ in exactly
one point. This regional counting cannot go below order $1/\log L$. The
range between $\Theta(1/n)$, achieved by affine hashing, and $O(1/\log n)$
remains open.
\end{abstract}
\vspace{6pt}
\noindent\textbf{Keywords:} witness isolation; affine spaces; tagged gadgets;
nonuniform domination; selectors.

\section{Introduction}
\label{sec:intro}
A witness-isolation procedure takes a Boolean circuit $H$ and outputs a circuit
on the same variables whose satisfying set is a subset of $\Sol(H)$, with the
objective of leaving exactly one assignment. Its output is a circuit, not the
remaining witness. Valiant and Vazirani~\cite{VV86} achieve success
$\Omega(1/n)$ for $n$ witness variables. Dell, Kabanets, van Melkebeek, and
Watanabe (DKMW)~\cite{DKMW13} show that success $2/3+1/\poly(L)$ implies
$\NP\subseteq\Ppoly$, using success promises only on singleton and two-point
inputs. Their intermediate-success question asks how far this threshold can
be lowered.

We study the problem already under the promise that $\Sol(H)$ is a nonempty
affine subspace over $\F_2$. The circuit is the entire input: no origin, basis,
or dimension is supplied, and arbitrary use of its syntax is allowed.
Throughout, $L$ denotes description length and $n$ denotes witness length.
The following results are nonuniform complexity-collapse implications, not
unconditional separations of complexity classes.

\subsection{Every positive constant, already on bounded-witness inputs}
\begin{theorem}[Constant-witness obstruction to every constant guarantee]
\label{thm:constantintro}
Fix $0<\epsilon\le1$ and put $d=\lfloor2/\epsilon\rfloor+1$.
If a randomized nonuniform polynomial-size pruning transformation has success
at least $\epsilon$ on every nonempty affine input with at most $2^{d-1}$
satisfying assignments, at every sufficiently large description length, then
$\NP\subseteq\Ppoly$.
Consequently, for each fixed positive $\epsilon\le1$, an efficient nonuniform
pruning transformation with affine-input guarantee $\epsilon$ exists if and
only if $\NP\subseteq\Ppoly$.
\end{theorem}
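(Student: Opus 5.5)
We follow the route sketched in the abstract: compile many instances into one tagged affine gadget, then reduce to a combinatorial counting bound. The target is a nonuniform polynomial-size family deciding an $\NP$-complete language, for instance $\mathrm{SAT}$. Following DKMW~\cite{DKMW13}, it suffices to produce, for each input length $m$, polynomial advice. With this advice, a deterministic procedure should find a satisfying assignment of any satisfiable formula of length $m$, or equivalently decide satisfiability by self-reduction. We obtain such advice by a nonuniform domination (selector) argument over the pool of all length-$m$ formulas.

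Step one is the \emph{tagged affine gadget}. Take a pool $H_1,\dots,H_k$ of length-$m$ formulas and a tag dimension $d=\lfloor 2/\epsilon\rfloor+1$. Assign to each member $i$ an affine region $A_i\subseteq\F_2^d$ of some dimension below $d$. Build one circuit $C$ whose satisfying set is $\{(t,\text{padding}) : t\in\F_2^d\setminus\bigcup_{i:\,H_i\text{ unsat}}(\F_2^d\setminus A_i)\}$, arranged to be affine. The intended property is that if all members are satisfiable, $\Sol(C)$ is all of $\F_2^d$, and if member $i$ is unsatisfiable, the tags shrink to $A_i$. Witnesses of the $H_i$ are folded into the circuit, so the solution set stays affine with at most $2^{d-1}$ points, as required by the promise. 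Getting affineness right when several members are unsatisfiable is delicate. I would restrict to the case where at most one member is unsatisfiable, handling the rest with padding by a known-satisfiable instance. The pruning transformation applied to $C$ outputs a circuit $C'$. On the event that $C'$ isolates, $C'$ is satisfied by exactly one point, and that point lies in $A_i$ when $H_i$ is unsatisfiable.

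Step two converts isolation into a decision. Given the outputs on the all-satisfiable gadget, we would like to read off which tags survive; evaluating $C'$ on all $2^d=O(1)$ tags is cheap since $d$ is constant. For a fixed random string, let $S$ be the set of tags accepted by $C'$ in the all-satisfiable world. Success on the instance where $H_i$ is unsatisfiable means $C'$ meets $A_i$ in exactly one point. Monotonicity makes the pruned set a subset of the solution set, but $C'$ depends on the whole circuit, so the two worlds are not directly comparable. To compare them, I would run the transformation on the gadget and test whether $|\Sol(C')\cap A_i|=1$. A satisfiable $H_i$ should then fail this test for most regions. The argument uses an averaging over regions: an equally weighted family of affine subspaces of all dimensions below $d$. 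By the combinatorial bound, a fixed set $S$ meets more than a $2/d<\epsilon$ fraction of such a family in exactly one point for no $S$. So $\epsilon$-success cannot hold on all unsatisfiable members simultaneously unless the behaviour genuinely distinguishes them. A counting/averaging argument then yields a polynomial-size advice set of random strings and gadget contexts that witnesses unsatisfiability, which places co-SAT in $\NP/\poly$ and hence gives $\NP\subseteq\Ppoly$.

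The main obstacle is the combinatorial lemma: no $S\subseteq\F_2^d$ is a unique hitting set for more than a $2/d$ fraction of the family. I would prove it by grouping subspaces by dimension $j$. A uniformly random $j$-flat meeting $S$ in exactly one point relates to the $(j+1)$-flat containing it meeting $S$ in exactly one or two points. Telescoping over $j=0,\dots,d-1$ bounds the summed exactly-one probabilities by $2$, which gives average at most $2/d$. The converse direction of the corollary is immediate: if $\NP\subseteq\Ppoly$, nonuniform circuits can find the lexicographically first satisfying assignment by self-reduction and output the circuit accepting only it, which succeeds with probability $1$.
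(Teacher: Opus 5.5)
Your outline has the paper's architecture (tagged gadget, a NO member shrinking the tags to its region, the $2/d$ flag bound, domination), but the gadget as you describe it never produces inputs covered by the hypothesis. With arbitrary satisfiable members, a block that must hold a witness of $H_i$ has $|\Sol(H_i)|$ completions. The satisfying sets are then not in bijection with tags, not bounded by $2^{d-1}$, and generally not affine. You must work with the unique-versus-zero promise problem $\USAT$ and return to $\NP$ by the Valiant--Vazirani transfer (Lemma~\ref{lem:usatransfer}). Even with unique witnesses $w_i$, one block per member gated by $[t\in A_i]$ has completion $[t\notin A_i]\,w_i$. On $A_j$ this is affine only when $A_i\cap A_j$ is empty, all of $A_j$, or a hyperplane of $A_j$, which fails for flats of all dimensions. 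The missing device is one block per defining equation $\lambda_{i,a}(t)=\beta_{i,a}$ of $A_i$: zero when the equation holds, a witness of $H_i$ otherwise. Its completion $(\lambda_{i,a}(t)\oplus\beta_{i,a})w_i$ is affine in $t$. On $A_j$ all blocks of member $j$ vanish, so an unsatisfiable $H_j$ leaves exactly the graph over $A_j$. The witness blocks must also be free variables, not constants folded into the circuit. The compiler sees only the sorted pool, so every choice of candidate within a pool gives literally the same input to $\I$, which may read the whole circuit.

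Your decision step is inverted, and its conclusion does not follow. The tester for a candidate $y$ takes as advice $m-1$ YES instances with their witnesses and compiles the gadget of the sorted pool $X\cup\{y\}$. On a fixed hardwired seed list, it measures the fraction of outputs with exactly one point in $y$'s region $R_j$. Enumerating $R_j$ needs the advice witnesses, not just the $2^d$ tags. If $y$ is NO, the gadget is a one-NO input covered by the promise, so the fraction is at least $\epsilon-\gamma$. If $y$ is YES, it is the pool's all-YES gadget, identical for every designation. The fraction averaged over designations then equals the list average of $f_{\mathcal F}$ of the tag projection, which is at most $2/d<\epsilon-\gamma$ once $\gamma<\epsilon-2/d$. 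Thus a low fraction certifies YES deterministically. Ko--Ogihara greedy deletion then gives polynomially many advice sets covering every YES instance, so $\USAT\in\Ppoly$. A high fraction cannot certify unsatisfiability, since YES candidates may have it for some designations. Moreover, ``co-SAT $\in\NP/\mathrm{poly}$, hence $\NP\subseteq\Ppoly$'' is not a valid step: that inclusion is only known to collapse the polynomial hierarchy to its third level. The flag bound itself is true, and the converse direction is fine, but your ``one or two points'' relation is not correct. Use the descending chain of random flats instead: after the first singleton level, each further hyperplane keeps the point with probability $1/2$, so the expected number of singleton levels is at most $2$.
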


The witness bound depends only on $\epsilon$, not on $L$ or $n$. The success
promise is used only on one-NO tagged gadgets, whose witness counts are powers
of two at most $2^{d-1}$. The all-YES gadgets have $2^d$ witnesses; they are
used for averaging but need only satisfy the all-seed pruning condition, not
any isolation guarantee. Theorem~\ref{thm:constants} and
Corollary~\ref{cor:constants} prove the statement.

\begin{theorem}[An inverse-logarithmic collapse threshold]
\label{thm:logintro}
If a randomized nonuniform polynomial-size pruning transformation isolates
every nonempty affine input with at most $L^{1/3}$ satisfying assignments with
probability at least $10/\log L$, at every sufficiently large description
length $L$, then $\NP\subseteq\Ppoly$.
\end{theorem}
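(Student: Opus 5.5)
The plan is to rerun the argument behind Theorem~\ref{thm:constantintro}, but with the tag dimension $d$ growing with $L$: take $d=\Theta(\log L)$, small enough that gadgets with $2^{d}$ witnesses stay within the $L^{1/3}$ witness bound. The goal is to show that a pruning transformation $T$ with success $10/\log L$ on such inputs yields a nonuniform polynomial-size decision procedure for an NP-complete language. Concretely, we aim to put $\overline{\mathrm{SAT}}$ into $\NP/\poly$, which gives $\NP\subseteq\Ppoly$ as in DKMW-style arguments, or alternatively to build a selector/domination argument for SAT instances.

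\textbf{Step 1: the tagged affine gadget.} Given a pool of $m$ formulas $\varphi_1,\dots,\varphi_m$, each of length $\le n$, build a single circuit $G$ whose satisfying assignments are pairs (tag $t\in\F_2^d$, auxiliary data). Assign each $\varphi_i$ an affine region $R_i\subseteq\F_2^d$, with the regions ranging over affine subspaces of every dimension below $d$. The construction must satisfy two requirements:
\begin{itemize}[leftmargin=*]
\item if all $\varphi_i$ are satisfiable, $\Sol(G)$ is canonically identified with $\F_2^d$, so the affine promise holds;
\item if exactly one $\varphi_i$ is unsatisfiable, $\Sol(G)$ shrinks to the affine set $R_i$, still nonempty and affine, with at most $2^{d-1}$ witnesses.
\end{itemize}
Witnesses of satisfiable members are hard-wired away: the affine structure is enforced by letting the witness for $\varphi_j$ appear only through a fixed injective encoding, so each tag carries a unique witness. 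The description length is $L=\poly(n,m,2^{d})$. Choosing $d=\lfloor\frac13\log L\rfloor$, or smaller, keeps $2^d\le L^{1/3}$, and then $\log L=\Theta(d)$.

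\textbf{Step 2: the combinatorial bound.} This is the key step: no fixed set $S\subseteq\F_2^d$ meets more than a $2/d$ fraction of the equally weighted family of regions in exactly one point. Apply it to $S=\Sol(T(G,r))$, which is a subset of $\F_2^d$ determined by the all-YES gadget and the seed $r$. Since $T$ prunes, on a one-NO instance with NO-member $i$ the output's solution set is contained in $S$, hence in $S\cap R_i$. The intended reading is that the output is exactly $S\cap R_i$ when the output's behaviour on tags is independent of which member is NO. This is the delicate point: $T$ reads the whole circuit, so we instead argue through a nonuniform average. Success $\ge 10/\log L>2/d$, averaged over seeds and a uniformly random NO position, exceeds what any fixed set $S$ can achieve. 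Hence for a typical seed, the outputs on one-NO gadgets cannot all be consistent with a single tag set.

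\textbf{Step 3: the distinguishing procedure.} Hard-wire good seeds and an advice pool of YES instances; Adleman-style amplification or a counting argument over polynomially many seeds supplies them. Then certify unsatisfiability of $\varphi$ as follows:
\begin{itemize}[leftmargin=*]
\item insert $\varphi$ into a random slot of the pool and run $T$;
\item observe whether the output isolates a point inside the slot's region, which is checkable given witnesses for the YES members;
\item by the gap between $10/\log L$ and $2/d$, when $\varphi$ is unsatisfiable this happens noticeably more often than is possible when $\varphi$ is satisfiable.
\end{itemize}
In the satisfiable case the gadget is all-YES and the output set is the fixed $S$, so the rate is capped by $2/d$. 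This gives an $\NP/\poly$ (indeed $\mathrm{coNP}\subseteq\NP/\poly$) certificate, and the standard Yap-type argument yields the collapse.

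\textbf{Main obstacle.} The hardest part is keeping $L$ polynomial while $2^d$ grows as $L^{1/3}$, and making the constants fit. The margin $10/\log L$ versus $2/d$ with $d\approx\frac13\log L$ leaves only a factor of about $5/3$. The gadget's size overhead in $d$ therefore has to be polynomial with controlled exponent. I would first try to prove the $2/d$ bound by a weight-shifting argument over the dimension levels, and only then tune $d$.
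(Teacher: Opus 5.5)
Two steps of your plan fail as written, and the final implication is wrong.

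\textbf{The region family is not polynomial.} The $2/d$ bound of Lemma~\ref{lem:scalesum} is for the equal-dimension-weight family of \emph{all} affine flats of dimension below $d$. At $d=\Theta(\log L)$ that family is superpolynomial. Already the $\lfloor d/2\rfloor$-flats number at least $2^{\lfloor d/2\rfloor\lceil d/2\rceil}$, which is $L^{\Theta(\log L)}$. Every region costs one pool member and up to $d$ guarded blocks of length $\ell$. Your accounting $L=\poly(n,m,2^d)$ hides this in $m$, and the obstacle you name (per-region overhead in $d$) is not the real one.

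The missing idea is a polynomial subfamily that keeps $\max_T f_{\mathcal F}(T)=O(1/d)$ for all $2^{2^d}$ tag sets simultaneously. The paper does this as follows:
\begin{itemize}
\item it samples $m=2^d d^2$ independent flats (uniform dimension, then uniform flat);
\item for each fixed $T$, Lemma~\ref{lem:scalesum} gives mean at most $2/d$;
\item Hoeffding bounds a deviation of $1/d$ by $e^{-2\cdot 2^d}$, and a union bound covers all $T$;
\item the resulting family is hardwired as advice.
\end{itemize}
This sampling costs a constant: $c\le 3/d$. With $d=\lfloor\log\ell\rfloor$ and $L=\ell^3$ one gets $c\le 9/(\log L-3)$, which is below $10/\log L$ by at least $1/(2\log L)$ once $\log L\ge 57$. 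So your $5/3$ margin is really about $10/9$, which is why the constant is $10$.

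\textbf{The decision step does not work.} In Step~3 you claim that for satisfiable $\varphi$ ``the output set is the fixed $S$, so the rate is capped by $2/d$.'' This is false, and so is the Step~2 containment in $S\cap R_i$. Inserting $\varphi$ into any slot changes the compiled circuit. Nothing prevents $T$ from outputting, on that particular all-YES circuit, a single point of exactly the region you test. The regional bound controls only $\frac1m\sum_i\ind[|D\cap K_i|=1]$ for \emph{one} output $D$. It is therefore usable only when a single output is scored against different regions.

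The paper arranges this with canonical ordering and a witness-free compiler. For an all-YES pool $Q$, each designation $(Q\setminus\{E_j\},E_j)$ compiles the identical $H_{\mathcal F}(Q)$ and uses the identical hardwired seed list. Hence the average over $j$ of the acceptance statistics equals the list average of $f_{\mathcal F}(\tau(D_r))\le c<p'$ (Lemma~\ref{lem:regionalaverage}). So some member of every pool is accepted, while NO inputs are rejected by the one-NO promise. Turning ``one accepted member per pool'' into a decision procedure needs Ko--Ogihara domination, not a single advice pool. Lemma~\ref{lem:pooldom} gives an advice set accepting a $1/(2m)$ fraction of any large set of YES instances, and greedy deletion then uses $O(m\ell)$ advice sets.

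\textbf{The gadget and the endgame.} Pool members must be unique-witness circuits. With general satisfiable formulas the gadget's solution set is a union of products of witness sets, which is not affine, so the promise never applies; no ``injective encoding'' fixes this. The endgame must also be deterministic. The testers, with hardwired seeds, advice sets and witnesses, form a $\Ppoly$ circuit for $\USAT$, and Lemma~\ref{lem:usatransfer} then gives $\NP\subseteq\Ppoly$. Your route through $\mathrm{coNP}\subseteq\NP/\mathrm{poly}$ and Yap yields only $\mathrm{PH}=\Sigma_3^{p}$, not the stated conclusion.
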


Theorem~\ref{thm:logarithmic} proves this using a polynomial-size family of
regions. In particular, no constant-success guarantee remains outside the
collapse implication. Explicit small gadgets also improve the threshold to
$3/5$ with success required only on one- and two-witness inputs
(Proposition~\ref{prop:pointsandlines}); pruning is still required on other
inputs, including the four-witness all-YES gadgets.

\paragraph{Proof overview.}
Attach an affine region $K_i\subseteq\F_2^d$ to each member $E_i$ of a canonically
ordered pool. A guarded block uses the known dummy witness $0^\ell$ when a
region equation holds, and the unique witness of $E_i$ otherwise. If every
member is YES, the satisfying set is the graph of an affine map on all tags.
If $E_j$ is NO, the tag is forced into $K_j$, and all blocks of $E_j$ become
zero. The resulting region is therefore known from the other members'
witnesses. Tags remain distinct even when source witnesses coincide.

For a tag set $T$, let $f_{\mathcal F}(T)$ be the fraction of regions meeting
$T$ in exactly one point. All designations of one pool use the same compiled
description and the same fixed randomness list. Their average regional
singleton frequency is exactly the list average of $f_{\mathcal F}(T)$.
A uniform bound below the tester's threshold makes at least one designation
accept. Ko--Ogihara domination then produces polynomial advice.

Disjoint regions permit one retained point in every region and hence have
$\max_{|T|\ne1}f_{\mathcal F}(T)=1$. Their symmetric threshold tends to $1/2$.
For regions distributed uniformly across dimensions $0,\ldots,d-1$, the
affine-flag bound instead gives $\max_T f_{\mathcal F}(T)\le2/d$.
It is this simultaneous control of every retained tag set, not overlap by
itself, that removes the constant threshold.

\paragraph{The remaining range and the length convention.}
The theorems state description-length guarantees $p(L)$, with efficiency and
advice also measured in $L$. To compare with affine hashing, define
$p_{\mathrm{wit}}(n)$ to be a success lower bound on every affine circuit with
$n$ witness bits, retaining the same description-length efficiency convention.
Our compiled gadgets have $n\le L$ and $n\to\infty$, so
$p_{\mathrm{wit}}(n)\ge10/\log n$ implies the hypothesis used in
Theorem~\ref{thm:logintro}. Padding can increase $L$ without changing $n$;
these conventions are not otherwise interchangeable.
To our knowledge, the best general achievable scale remains $\Theta(1/n)$
from affine hashing~\cite{VV86}. The present results do not settle the
intermediate scales, in particular
$p_{\mathrm{wit}}(n)=\omega(1/n)$ with
$p_{\mathrm{wit}}(n)=o(1/\log n)$, under $\NP\not\subseteq\Ppoly$ alone.
They also do not exclude every positive leading constant in a
$1/\log L$ guarantee.

The companion report~\cite{BD26Crypto} studies cryptographic
security--isolation reductions, nearly optimal bounds for checkable filters,
and an advantage-sensitive oracle-output barrier.

\paragraph{Organization.}
Section~\ref{sec:prelim} gives the circuit model and elementary isolation facts.
Section~\ref{sec:pools} proves the tagged regional-collapse theorem, its explicit
families, and the multiscale consequences. Section~\ref{sec:sharpness} supplies
sharpness laws, their finite-randomness interpretation, and a floor for the
regional counting method. Section~\ref{sec:poolliterature} compares the proof
with DKMW and the selector literature.

\section{Preliminaries}
\label{sec:prelim}
All logarithms are base two. Efficient transformations are classical,
possibly nonuniform polynomial-size circuit families; their advice depends
only on the input-description length.

\subsection{Circuits and pruning}
We use a fixed bounded-fan-in Boolean circuit encoding. A length-$L$ description specifies at most $L$ input variables. Padding does not change the ordered variables or the function. For a circuit $H$, let
\[
 \Sol(H)=\{x\in\{0,1\}^{n(H)}:H(x)=1\}.
\]
The relation $H\equiv H'$ means that the circuits have the same ordered variables and compute the same Boolean function. Unless explicitly stated otherwise, circuit-description length, not gate count or witness length, is the parameter in the isolation guarantee.

\begin{definition}[Pruning transformation]
\label{def:pruning}
A randomized polynomial-size pruning transformation is a possibly nonuniform circuit family $\I=(\I_L)_L$ that maps an input description $\langle H\rangle$ and a uniform random string $\rho$ to the description of a Boolean circuit $\I(H;\rho)$ on the same ordered variables. Its size, random-seed length, and output-description length are bounded by fixed polynomials in $L$. For every input and every seed,
\[
 \Sol(\I(H;\rho))\subseteq\Sol(H).
\]
Write $D_\I(H;\rho)=\Sol(\I(H;\rho))$. An affine-input isolation guarantee $p(L)$ means that
\[
 \Pr_\rho[|D_\I(H;\rho)|=1]\ge p(L)
\]
for every length-$L$ circuit whose satisfying set is a nonempty affine subspace over $\F_2$.
\end{definition}

Pruning is required on every input, while the success promise concerns only affine inputs. The output's unsuccessful realizations, including the empty set and sets with several witnesses, remain part of the experiment. Success is not conditioned on finding or recognizing a unique witness. Every test below evaluates one sampled output circuit at several points without rerunning $\I$ between evaluations.

\subsection{The DKMW bound under the affine promise}
\begin{proposition}[DKMW already applies to affine inputs]
\label{prop:dkmwaffine}
If a randomized nonuniform polynomial-size pruning transformation isolates every nonempty affine-subspace input with probability at least $2/3+\varepsilon(L)$, where $\varepsilon(L)$ is inverse-polynomially bounded below at all sufficiently large lengths, then $\NP\subseteq\Ppoly$. Its success guarantee is needed only on inputs with one or two satisfying assignments.
\end{proposition}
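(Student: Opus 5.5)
We would prove Proposition~\ref{prop:dkmwaffine} by re-running the DKMW argument and checking that every circuit to which their argument applies the isolation guarantee has an affine satisfying set. The DKMW collapse goes through a nonuniform algorithm for the complement of SAT (or of an NP-complete problem), built by Ko--Ogihara style domination. One takes a pool of formulas $\varphi_1,\dots,\varphi_m$ of one length, each with at most one satisfying assignment (after a standard Valiant--Vazirani step, which is absorbed into the nonuniform advice). One then combines them into circuits whose solution sets are determined by the YES/NO pattern of the members, and uses the success probability to show that some member can be certified NO by a short advice string. The plan is to use, in place of general pairs, the two-instance combination $H=E_i\wedge E_j$ on disjoint variables, padded by a free coordinate where needed. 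The key point is that a product of a singleton or empty set with $\{0,1\}$ or with a singleton is an affine subspace.

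\textbf{Step 1: gadgets.} For two members $E,E'$ that are promise-USAT instances, build two circuits.
\begin{itemize}[nosep]
\item $G(E)$ has variables $(x,b)$ with $b\in\{0,1\}$ and accepts iff $E(x)=1$. Its solution set is $\{w\}\times\{0,1\}$, a 1-dimensional affine space, if $E$ is YES, and empty otherwise.
\item $E$ itself, which has a singleton solution set if it is YES.
\end{itemize}
In DKMW the threshold $2/3$ arises from comparing the success on a two-witness input, where the pruned output must keep exactly one of two points, against singleton inputs, where it must keep the point. Summing over symmetric designations gives $p+p+p>2$ total unit mass exceeding what is consistent when a member is NO. We will reproduce this accounting with the affine two-point sets $\{w\}\times\{0,1\}$ and the singletons $\{w\}$. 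Every input on which the guarantee is invoked is then affine with one or two points. Inputs built from NO members may have empty solution sets and are non-affine for the promise, but there only the pruning condition $\Sol(\I(H;\rho))\subseteq\Sol(H)$ is used, which holds on every input.

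\textbf{Step 2: test and domination.} For a fixed seed list $\rho_1,\dots,\rho_k$ (hardwired as advice, chosen by a Chernoff and union bound over the $2^{\poly(L)}$ inputs so that empirical success stays above $2/3+\varepsilon/2$), define the deterministic tester. A pool member is declared NO if the sampled outputs on the gadget behave inconsistently with all members being YES, evaluated at the points determined by the other members' witnesses. This uses only evaluation of one sampled output circuit at several points, as the paper's test convention requires.

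\textbf{Step 3: counting and advice.} Show that if all members were YES, the success rate exceeding $2/3$ forces a contradiction in the count over the three symmetric positions. Hence each pool of NO instances contains a designation that is accepted. Ko--Ogihara domination then selects polynomially many NO instances (with witness data) whose combinations certify all NO instances of length $L$. This yields $\mathrm{coNP}\subseteq\mathrm{NP}/\poly$ and hence $\NP\subseteq\Ppoly$ for USAT-based statements, via Karp--Lipton/Yap-style reasoning as in DKMW.

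\textbf{Main obstacle.} The main obstacle is Step 1. We must verify that DKMW's original combination, which uses arbitrary disjunctions whose two witnesses need not form an affine line, can be replaced by the product-with-free-bit gadget without losing the $2/3$ accounting. We would first try to show that their argument only needs some two-point input whose points are ``witness of $E$ paired with either value of a bit'', since any two distinct points of $\F_2^n$ form an affine line. Indeed, any two-element set is automatically affine over $\F_2$, and any singleton is affine. If DKMW's proof truly invokes the guarantee only on inputs with one or two satisfying assignments, the affine promise holds for free. The remaining work is then to confirm this claim in their proof and that description lengths stay polynomially related.
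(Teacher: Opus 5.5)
Your closing observation is the whole of the paper's proof. Every singleton and every two-point subset of $\F_2^n$ is an affine subspace, so DKMW applies unchanged as long as its hypotheses concern only one- and two-witness inputs. You leave exactly this as ``remaining work,'' and it is the part that has to be supplied. The paper supplies it by citing the precise form of DKMW's Theorem~3.4, which uses only two properties:
\begin{itemize}
\item[(i)] a unique witness is retained with probability at least $p_1$;
\item[(ii)] on a two-witness input, the output fails to retain both witnesses with probability at least $p_2$.
\end{itemize}
The condition $p_1+p_2/2\ge1+1/\poly(L)$ then gives $\NP\subseteq\Ppoly$. Because every output prunes, isolating a singleton input means retaining its point, and isolating a two-point input means not retaining both. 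Hence $p_1,p_2\ge2/3+\varepsilon(L)$, so $p_1+p_2/2\ge1+3\varepsilon(L)/2$. The finitely many exceptional lengths are absorbed into the advice.

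Your attempted re-derivation in Steps~1--3 would not work as written.
\begin{itemize}
\item \textbf{The gadget.} The conjunction $E_i\wedge E_j$, with or without a free bit, is unsatisfiable as soon as one member is NO, so the isolation guarantee says nothing there. When both members are YES, its points involve the unknown witness of the tested instance, so nothing checkable comes out of it. Your $G(E)$ involves only one member.
\item \textbf{What works instead.} A selector-bit disjoint union $(b=0\wedge E_i(x))\vee(b=1\wedge E_j(x))$ does the job. It is a singleton when exactly one member is NO and the two-point set $\{(0,w_i),(1,w_j)\}$ when both are YES. Both are already affine, so the obstacle you flag does not arise.
\item \textbf{The domination step is reversed.} The advice consists of YES instances with their witnesses; NO instances have no witnesses to supply. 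The tester accepts $y$ when the fixed-list frequency of retaining the advice witness's point is below threshold, and soundness comes from (i). Completeness comes from the bound $\Pr[\text{keep }a]+\Pr[\text{keep }b]\le2-p_2<2p_1$, which forces some designation of every all-YES pair to be accepted.
\item \textbf{The endpoint.} The argument above puts $\USAT$ in $\Ppoly$ directly, and Lemma~\ref{lem:usatransfer} finishes. Your endpoint ``$\mathrm{coNP}\subseteq\NP/\poly$, hence $\NP\subseteq\Ppoly$'' is not a known implication; Yap's theorem gives only a collapse of the polynomial hierarchy.
\end{itemize}
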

\begin{proof}
Theorem~3.4 of DKMW~\cite{DKMW13} needs only two properties: a unique witness is retained with probability at least $p_1$, and on an input with exactly two witnesses the output fails to retain both with probability at least $p_2$. The condition $p_1+p_2/2\ge1+1/\poly(L)$ implies $\NP\subseteq\Ppoly$. A nonempty singleton and an arbitrary two-point binary set are affine spaces of dimensions zero and one. Pruning and successful isolation therefore give both properties with $p_1,p_2\ge2/3+\varepsilon(L)$, so $p_1+p_2/2\ge1+3\varepsilon(L)/2$. The finitely many exceptional lengths can be handled nonuniformly.
\end{proof}

\subsection{Unique-satisfiability transfer}
We include the standard promise-transfer argument so that the nonuniform
collapse proof is complete within this report.

\begin{lemma}[Unique-satisfiability transfer {\cite{VV86,DKMW13,BD26}}]
\label{lem:usatransfer}
Let $\USAT$ be the promise problem whose YES circuits have exactly one
satisfying assignment and whose NO circuits have none. If $\USAT$ has
nonuniform polynomial-size circuits, then $\NP\subseteq\Ppoly$.
A bounded-error randomized circuit family for the promise problem also suffices.
\end{lemma}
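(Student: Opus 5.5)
The plan is to follow the classical route: Valiant--Vazirani reduces SAT to $\USAT$ with randomness, a small-circuit family for $\USAT$ turns this into a bounded-error randomized circuit family for SAT, and Adleman's amplification plus fixing of random bits puts SAT in $\Ppoly$. Since SAT is $\NP$-complete under polynomial-time many-one reductions and $\Ppoly$ is closed under them, this gives $\NP\subseteq\Ppoly$. I will first handle a deterministic $\USAT$ family $(C_m)_m$ and then observe that the argument only ever uses a bounded-error randomized family, so the second sentence of the statement follows from the same proof.

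\textbf{Step 1 (isolation).} For a circuit $H$ on $n$ variables, pick $k\in\{0,\dots,n\}$ uniformly, a random affine map $A:\F_2^n\to\F_2^{k}$, and a target $b$. Let $H_{A,b}(x)=H(x)\wedge[Ax=b]$. The standard pairwise-independence estimate gives
\[
\Pr[|\Sol(H_{A,b})|=1]\ge \frac{1}{8n}
\]
whenever $\Sol(H)\ne\emptyset$. If $H$ is unsatisfiable, then $H_{A,b}$ is always unsatisfiable. The description of $H_{A,b}$ has length polynomial in $L$, and I pad it to a fixed length $m(L)$ so that a single circuit $C_{m(L)}$ applies to every instance.

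\textbf{Step 2 (one trial).} Feed $H_{A,b}$ to $C_{m}$. The key issue is that outside the promise, $C_m$ may output anything. I will handle this with a soundness guard rather than trusting $C_m$ directly:
\begin{itemize}[nosep]
\item Rather than only asking $C_m$ for a yes/no answer, use self-reducibility: under the promise that $H_{A,b}$ has at most one solution, fixing variables $x_1,x_2,\dots$ in turn keeps the unique-solution promise on the branch containing the solution and makes the other branch unsatisfiable.
\item Querying $C_m$ on the restricted circuits therefore reconstructs a candidate assignment $x$.
\item Accept only if $H(x)=1$.
\end{itemize}
This produces a polynomial-size randomized circuit with one-sided error. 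It never accepts an unsatisfiable $H$, and it accepts a satisfiable $H$ with probability at least $1/(8n)$.

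\textbf{Step 3 (amplification).} Repeat Step 2 independently $O(nL)$ times and accept if any trial accepts. The error on satisfiable inputs falls below $2^{-L}$. By Adleman's union bound over the at most $2^{L}$ inputs of length $L$, some fixed choice of all random strings is correct on every input, and that choice becomes the advice. For a bounded-error randomized $\USAT$ family, I first amplify each query of Step 2 by majority vote so that all $\poly(L)$ queries in a trial are correct except with probability $2^{-2L}$. The union-bound step then proceeds unchanged, with the final check $H(x)=1$ preserving one-sided soundness.

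I expect the main obstacle to be Step 2, namely the behaviour of $C_m$ on circuits outside the promise. Branches where isolation failed and several solutions remain are exactly where its answers are meaningless. The witness-reconstruction guard resolves this, so the only error is one-sided.
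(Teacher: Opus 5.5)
Your proof is correct, but Step~2 takes a more elaborate route than the paper's proof. The overall structure is the same in both: Valiant--Vazirani pruning, then a union bound to fix a polynomial list of random strings.

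\textbf{What the paper does instead.} It fixes, by a union bound, a list of prunings so that every satisfiable input of the given length has a uniquely satisfiable pruning in the list. It then outputs the OR of the $\USAT$ decider's answers on those prunings. No guard is needed:
\begin{itemize}
\item If $H$ is unsatisfiable, every pruning is unsatisfiable, hence a promised NO instance, so every answer is $0$.
\item If $H$ is satisfiable, one promised YES pruning already forces the OR to $1$, whatever the decider says on the off-promise prunings.
\end{itemize}
So the off-promise behaviour you single out as the main obstacle is harmless for the plain decision version.

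\textbf{What your guard buys and costs.} Witness reconstruction costs $n$ adaptive queries per trial, plus padding of the restricted circuits. In return, soundness becomes unconditional: it does not depend on the decider's answers at all, even on promised NO instances. In the randomized case your amplification therefore only has to protect completeness, and the resulting circuits actually output a satisfying assignment. The paper's plain OR needs the amplified decider to be correct on the promised NO prunings for soundness. It gets this by amplifying on every promised input before fixing the coins.

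\textbf{A small correction in Step~1.} With $k$ ranging only over $\{0,\ldots,n\}$, the generic pairwise-independence estimate does not cover $|\Sol(H)|>2^{n-1}$. Either let $k$ range up to $n+1$, or argue directly at $k=n$. There, $A$ is invertible with probability at least $1/4$, and then the unique preimage of $b$ is a uniform point lying in $\Sol(H)$ with probability greater than $1/2$. Either fix changes your success bound only by a constant factor.
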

\begin{proof}
The polynomial-size transfer is the argument of DKMW~\cite[Lemma 3.2]{DKMW13}; we include the size accounting. For each length-$n$ SAT input, repeat the Valiant--Vazirani pruning reduction polynomially many times. Each satisfiable input has a uniquely satisfiable pruning with inverse-polynomial probability in one trial. Repeating enough times makes the probability of no such trial smaller than $2^{-n-1}$. A union bound over at most $2^n$ descriptions gives a fixed polynomial-size list of random strings that produces a uniquely satisfiable pruning for every satisfiable input of that length. Every pruning of an unsatisfiable input remains unsatisfiable.

Apply the promised decider to all prunings and take the OR. A satisfiable input has at least one promised YES pruning, so the OR is one, irrespective of answers on its unpromised prunings. An unsatisfiable input has only promised NO prunings, so the OR is zero. All pruned descriptions have length $n^{O(1)}$.

For a randomized promised decider, amplify its error on each promised input and fix its coins by a union bound over all descriptions of each relevant length. The amplification costs a polynomial factor. The resulting circuits have polynomial size. Polynomial-time many-one reductions then give the result for every fixed language in $\NP$.
\end{proof}

\subsection{The affine-hashing benchmark}
The following standard affine-target form of hashing records the achievable
inverse-witness-length guarantee when no target dimension is supplied.

\begin{lemma}[Affine hashing without the target dimension {\cite{VV86,BD26}}]
\label{lem:zerohash}
For every $0\le d\le n$, choose $i$ uniformly from $\{0,\ldots,d\}$ and
then a uniform linear map $h:\F_2^n\to\F_2^i$. The filter
$F_h(x)=\ind_{\{h(x)=0\}}$ is chosen independently of the target and satisfies
\[
 \Pr_{i,h}\bigl[|W\cap h^{-1}(0)|=1\bigr]\ge\frac{1}{4(d+1)}
\]
for every nonempty affine $W\subseteq\F_2^n$ of dimension at most $d$.
Conditional on $i=\dim W$, the restriction of $h$ to the direction space
of $W$ is invertible with probability at least $1/4$, and invertibility
makes the intersection a singleton. No basis or exact dimension of $W$
is supplied to the filter.
\end{lemma}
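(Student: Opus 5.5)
The proof conditions on the event $i=\dim W$, which has probability $1/(d+1)$ because $i$ is uniform on $\{0,\ldots,d\}$ and $0\le\dim W\le d$. It then bounds the conditional singleton probability by $1/4$.

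Write $W=a+V$, where $V\le\F_2^n$ is the direction space and $k=\dim V$. We never need to know $a$ or a basis of $V$. For a linear $h:\F_2^n\to\F_2^k$ and any fixed $w\in W$, the set
\[
W\cap h^{-1}(0)=\{w+v:\ v\in V,\ h(v)=h(w)\}
\]
is either empty or a coset of $\ker(h|_V)$. If $h|_V:V\to\F_2^k$ is a bijection, there is exactly one $v\in V$ with $h(v)=h(w)$, so the intersection is a singleton. This is the invertibility claim in the statement.

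It remains to show that $h|_V$ is invertible with probability at least $1/4$ when $h$ is uniform. Fix a basis $v_1,\ldots,v_k$ of $V$ and extend it to a basis of $\F_2^n$. A uniform linear map $h$ sends $v_1,\ldots,v_k$ to independent uniform vectors of $\F_2^k$. Invertibility of $h|_V$ is exactly the event that these images are linearly independent. That probability is
\[
\prod_{j=0}^{k-1}\bigl(1-2^{j-k}\bigr)=\prod_{m=1}^{k}\bigl(1-2^{-m}\bigr).
\]
This needs a uniform lower bound of $1/4$ in $k$, and that is the only mildly nontrivial step. I would use the elementary inequality $\prod_{m\ge1}(1-x_m)\ge 1-\sum_{m\ge2}x_m$ applied after the first factor:
\[
\prod_{m=1}^{k}\bigl(1-2^{-m}\bigr)\ge\tfrac12\Bigl(1-\sum_{m\ge2}2^{-m}\Bigr)=\tfrac12\cdot\tfrac12=\tfrac14.
\]
For $k=0$ the empty product is $1$ and $h$ maps to $\F_2^0$, so $W$ is a single point and the filter keeps it. For $k\le d\le n$ the target dimension $k$ is always a legal choice of $i$.

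Combining the two steps gives
\[
\Pr_{i,h}\bigl[|W\cap h^{-1}(0)|=1\bigr]\ge\Pr[i=k]\cdot\tfrac14=\frac{1}{4(d+1)}.
\]
The filter's distribution is defined without reference to $W$, so no origin, basis, or dimension is supplied. If an exact statement of the constant is wanted, the product converges to about $0.2888$, but $1/4$ suffices here.
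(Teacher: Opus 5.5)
Your proof is correct and follows the paper's argument. You condition on $i=\dim W$, which has probability $1/(d+1)$; you note that $h|_V$ is a uniform $k\times k$ matrix whose invertibility forces a singleton; and you bound $\prod_{m=1}^{k}(1-2^{-m})\ge 1/4$ by factoring out $1/2$ and applying $\prod(1-x_m)\ge 1-\sum x_m$ to the remaining factors. The only blemish is an index slip in your stated inequality: its left-hand product should run over $m\ge2$, as your displayed computation correctly does.
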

\begin{proof}
Choose $i$ uniformly from $\{0,\ldots,d\}$ and a uniform linear map $h:\F_2^n\to\F_2^i$. Accept $x$ if and only if $h(x)=0$. Let $W=v+U$ have dimension $m\le d$. Conditional on $i=m$, the restriction $h|_U$ is represented by a uniform $m\times m$ binary matrix. If it is invertible, the equation $h(v+u)=0$ has exactly one solution $u\in U$.

The invertibility probability is
\[
 \prod_{j=1}^m(1-2^{-j})\ge\frac14.
\]
For $m=0$, the empty product equals one. For $m\ge1$, factor out the first term $1/2$ and use
$\prod_{j=2}^m(1-2^{-j})\ge1-\sum_{j=2}^m2^{-j}\ge1/2$.
Multiplying by the probability $1/(d+1)$ that $i=m$ proves the claim.
\end{proof}

\paragraph{Finite randomness.}
Statements that choose uniformly from an arbitrary finite set describe an
ideal distribution unless its cardinality is a power of two. Such choices
admit bounded-fair-coin approximations of any prescribed inverse-polynomial
accuracy with polynomial overhead. In particular, the ideal hashing lemma
above supplies an $\Omega(1/n)$ bounded-coin guarantee after an error smaller
than a fixed fraction of its lower bound. All pruning statements hold on every
seed. Sharpness claims below distinguish exact ideal probabilities from their
finite-coin approximations.

\section{Tagged gadgets and regional counting}
\label{sec:pools}
The all-YES satisfying set of the gadget is indexed bijectively by a tag
$t\in\F_2^d$. A NO member forces that tag into its assigned affine region.
Regions may overlap and have different dimensions. The exact projection
identity below replaces the disjointness argument and removes any need to
separate colliding witnesses.

\subsection{Normalization and fixed lists}
Write $\USAT=(\Yes,\No)$ for the unique-versus-zero promise problem, and let
$\Yes_\ell,\No_\ell$ denote its length-$\ell$ descriptions. Normalize a
predicate $E$ of arity $n(E)\le\ell$ by
\[
 \widehat E(z)=E(z_1,\ldots,z_{n(E)})\wedge
       \bigwedge_{j=n(E)+1}^{\ell}(z_j=0).
\]
For $E\in\Yes_\ell$, write $w(E)\in\{0,1\}^{\ell}$ for its zero-extended
witness. We continue to sort and name the original descriptions; the
normalization need not preserve description length.
Let $Z_\ell(z)=\bigwedge_{j=1}^{\ell}(z_j=0)$ be the dummy predicate.

\begin{lemma}[One fixed list for all promised classes]
\label{lem:poollists}
Let $\I$ be a randomized nonuniform polynomial-size pruning transformation,
and let $\mathcal C_L$ be any class of length-$L$ inputs. Write
$q_H=\Pr_\rho[|\Sol(\I(H;\rho))|=1]$.
For every inverse-polynomial $\gamma>0$, there is a deterministic list-producing
family $B_L$ with $J=O((L+1)/\gamma^2)$ entries such that every entry prunes its
input and the fraction of isolating entries is at least $q_H-\gamma$ for every
$H\in\mathcal C_L$. One list of seeds works simultaneously on any union of
promised classes, even when their success lower bounds differ.
\end{lemma}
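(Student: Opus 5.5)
The plan is a standard Chernoff-and-union-bound derandomization, with the seeds fixed nonuniformly. Let $r=r(L)$ be the seed length of $\I_L$. Draw $J$ seeds $\rho_1,\ldots,\rho_J$ independently and uniformly from $\{0,1\}^{r}$. The list $B_L(\langle H\rangle)$ outputs $\I(H;\rho_1),\ldots,\I(H;\rho_J)$. For $J$ polynomial in $L$, this is a polynomial-size family once the seeds are hardwired as advice. Every entry prunes its input for free, because the pruning condition of Definition~\ref{def:pruning} holds for every seed.

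The main step is the frequency bound. Fix one length-$L$ input $H$ and set $X_k=\ind\{|\Sol(\I(H;\rho_k))|=1\}$. The $X_k$ are independent Bernoulli variables with mean $q_H$. Hoeffding's inequality gives
\[
\Pr\Bigl[\tfrac1J\textstyle\sum_k X_k<q_H-\gamma\Bigr]\le e^{-2\gamma^2J}.
\]
There are at most $2^{L+1}$ descriptions of length at most $L$. Taking $J=\lceil (L+2)/\gamma^2\rceil=O((L+1)/\gamma^2)$ makes $e^{-2\gamma^2 J}2^{L+1}<1$. Hence some choice of seeds satisfies the bound simultaneously for every length-$L$ description, and we fix that choice as advice. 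Because $\gamma$ is inverse-polynomial, $J$ is polynomial.

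The simultaneity claim needs no extra work. The union bound ran over all length-$L$ inputs and did not refer to any promised success lower bound. The inequality $\text{fraction}\ge q_H-\gamma$ therefore holds for every $H$, and so on any class $\mathcal C_L$ or union of classes. Any promised bound $q_H\ge p$ transfers to a fraction of at least $p-\gamma$.

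The only point to watch is that the seed length $r$ is uniform over strings of a fixed length, so no ideal-versus-finite-coin issue arises here. If the transformation instead used ideal choices, the approximation from the finite-randomness paragraph would be absorbed into $\gamma$.
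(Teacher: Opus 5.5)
Your proposal is correct and takes essentially the same route as the paper, which cites the fixed-list argument of DKMW: draw independent uniform seeds, apply Hoeffding's bound $e^{-2J\gamma^2}$ to each fixed description, union-bound over all length-$L$ descriptions so that no promised class or success bound enters, hardwire the seeds, and inherit pruning from the every-seed condition. Your explicit choice $J=\lceil (L+2)/\gamma^2\rceil$, with the slightly more generous count $2^{L+1}$, just makes concrete the absolute constant that the paper leaves unspecified.
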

\begin{proof}
Use the fixed-list argument of DKMW~\cite[Lemma 2.2]{DKMW13}.
Choose $J$ independent uniform seeds of $\I_L$. For a fixed $H$, Hoeffding's
inequality bounds the probability of an empirical frequency below $q_H-\gamma$
by $\exp(-2J\gamma^2)$. There are at most $2^L$ descriptions in the entire
union of classes. With a sufficiently large absolute constant in $J$, the
union bound is less than one. Hardwire one such seed list and the advice for
$\I_L$. Every output prunes on every seed. The resulting list is a function
of the input description alone; no recognition of the promised classes or
of successful seeds is needed.
\end{proof}

\subsection{Region families and the tagged gadget}
\begin{definition}[Region family]
\label{def:regions}
A region family of tag dimension $d\ge1$ is a nonempty finite sequence
$\mathcal F=(K_1,\ldots,K_m)$ of nonempty affine subspaces of $\F_2^d$,
with repetitions allowed. Each region has a fixed presentation
\[
 K_i=\{t:\lambda_{i,a}(t)=\beta_{i,a}\quad(1\le a\le r_i)\},
 \qquad r_i=d-\dim K_i,
\]
by linearly independent linear forms. Put $s_{\max}=\max_i\dim K_i$.
For $T\subseteq\F_2^d$, define
\[
 f_{\mathcal F}(T)=\frac1m|\{i:|T\cap K_i|=1\}|,
 \quad g=\max_t f_{\mathcal F}(\{t\}),
 \quad c'=\max_{|T|\ne1}f_{\mathcal F}(T),
 \quad c=\max\{g,c'\}.
\]
\end{definition}

\begin{definition}[Tagged gadget]
\label{def:taggadget}
For a canonically ordered set $Q=\{E_1<\cdots<E_m\}$ of distinct length-$\ell$
descriptions, take tag variables $t\in\{0,1\}^d$ and blocks
$x^{i,a}\in\{0,1\}^{\ell}$ for $1\le i\le m$, $1\le a\le r_i$. Define
\begin{align}
 G_{i,a}(t,x^{i,a})
 &=\bigl([\lambda_{i,a}(t)=\beta_{i,a}]\wedge Z_\ell(x^{i,a})\bigr)
   \vee\bigl([\lambda_{i,a}(t)\ne\beta_{i,a}]
                    \wedge\widehat E_i(x^{i,a})\bigr),\label{eq:tagblock}\\
 H_{\mathcal F}(Q)&=\bigwedge_{i,a}G_{i,a}.\label{eq:taggadget}
\end{align}
The ordered variables include all $d$ tag bits, including any unused ones.
The compiler receives $\mathcal F,Q$, but no witnesses and no designation.
\end{definition}

Let $R=\sum_i r_i\le md$. The witness arity is $n=d+R\ell$.
The unpadded description length is
\[
 O\bigl((d+R(\ell+d))\log(d+R(\ell+d)+2)\bigr).
\]
For a fixed nontrivial family this is $O(\ell\log\ell)$, so one may pad to
$L(\ell)=\ell^2$ at sufficiently large $\ell$. The same padding is used for
all pools and designations at that source length, and adds no variables.

\begin{lemma}[Satisfying sets and computable designated regions]
\label{lem:tagsets}
Suppose all members other than $E_j$ are YES circuits. Define the affine map
$\chi_j$ blockwise by
\[
 \chi_j^{i,a}(t)=
 \begin{cases}
 (\lambda_{i,a}(t)\oplus\beta_{i,a})w(E_i),&i\ne j,\\
 0^\ell,&i=j,
 \end{cases}
 \qquad
 R_j(Q)=\{(t,\chi_j(t)):t\in K_j\}.
\]
This region is computable from $\mathcal F,j$, and the other members'
witnesses, without a witness for $E_j$.
\begin{enumerate}[label=(\alph*),leftmargin=2em]
\item If every member is YES, then
\[
 S_Q:=\Sol(H_{\mathcal F}(Q))=\{(t,\chi(t)):t\in\F_2^d\},
 \qquad \chi^{i,a}(t)=(\lambda_{i,a}(t)\oplus\beta_{i,a})w(E_i).
\]
It is an affine space of dimension $d$, and
$R_j(Q)=S_Q\cap\{t\in K_j\}$ for every $j$, regardless of witness collisions.
\item If $E_j$ is the only NO member, then
$\Sol(H_{\mathcal F}(Q))=R_j(Q)$, an affine space of dimension $\dim K_j$.
\end{enumerate}
\end{lemma}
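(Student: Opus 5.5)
The argument is a blockwise analysis of \eqref{eq:taggadget}. Fix a tag $t$ and examine each block $G_{i,a}$ on its own. Since $H_{\mathcal F}(Q)$ is the conjunction of the $G_{i,a}$, and each $G_{i,a}$ involves only $t$ and its own block $x^{i,a}$, an assignment $(t,x)$ satisfies the gadget exactly when every block $x^{i,a}$ lies in the set $\Sol$ of the disjunct selected by $t$. If $\lambda_{i,a}(t)=\beta_{i,a}$, the block must satisfy $Z_\ell$, so $x^{i,a}=0^\ell$. Otherwise the block must satisfy $\widehat E_i$. When $E_i$ is YES, $\widehat E_i$ has the unique satisfying assignment $w(E_i)$, because the normalization forces the padding bits to zero. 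When $E_i$ is NO, $\widehat E_i$ has no satisfying assignment. For a YES member the two cases merge into $x^{i,a}=(\lambda_{i,a}(t)\oplus\beta_{i,a})\,w(E_i)$. This is affine in $t$, since $\lambda_{i,a}$ is linear and $w(E_i)$ is a fixed vector. That gives $\chi$ and each $\chi_j$ as affine maps.

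For (a), every $t$ admits exactly one completion, namely $\chi(t)$. Hence $S_Q$ is the graph of the affine map $\chi$ on $\F_2^d$. A graph of an affine map is an affine space of dimension $d$: the projection to $t$ is a bijection, and the graph is an affine image of $\F_2^d$ under an injective map. For the identity $R_j(Q)=S_Q\cap\{t\in K_j\}$, compare $\chi$ and $\chi_j$ on $K_j$. They agree on all blocks with $i\ne j$ by definition. On the blocks with $i=j$, every $t\in K_j$ satisfies $\lambda_{j,a}(t)=\beta_{j,a}$ for all $a$, so $\chi^{j,a}(t)=0^\ell=\chi_j^{j,a}(t)$. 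The identity holds regardless of witness collisions because it is a statement about graphs indexed by $t$: distinct tags give distinct points even when $w(E_i)=w(E_{i'})$.

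For (b), the $E_j$-blocks impose the constraint. If $t\notin K_j$, some $a$ has $\lambda_{j,a}(t)\ne\beta_{j,a}$, and then block $(j,a)$ would need a witness of $\widehat E_j$, which does not exist. So no completion exists. If $t\in K_j$, every $E_j$-block is forced to $0^\ell$ and the other blocks are forced as in (a). The unique completion is then $\chi_j(t)$. The solution set is therefore $R_j(Q)$, the graph of the affine map $\chi_j$ restricted to the affine space $K_j$, and it has dimension $\dim K_j$. In the degenerate case $r_j=0$ we have $K_j=\F_2^d$ and there are no $E_j$-blocks, so the statement still holds.

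Computability is immediate. $\chi_j$ uses only the presentation of $\mathcal F$, the index $j$, and $w(E_i)$ for $i\ne j$, and $K_j$ is given by its equations. The step needing the most care is the normalization claim that $\widehat E_i$ has exactly one satisfying assignment (respectively none). This follows because the extra bits are forced to zero, so the solutions of $\widehat E_i$ correspond bijectively to those of $E_i$. Beyond that, the proof is bookkeeping, provided the tag bits stay in the ordered variables so that the bijectivity of the projection is preserved.
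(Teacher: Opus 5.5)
Your proof is correct and follows essentially the same route as the paper: a per-tag, per-block analysis in which each block has exactly one completion ($0^\ell$ or $w(E_i)$) or none, giving the graph of $\chi$ in the all-YES case, the agreement of $\chi$ and $\chi_j$ on $K_j$, and the confinement of solutions to $K_j$ when $E_j$ is NO. Your explicit check of the normalization $\widehat E_i$ and of the degenerate case $r_j=0$ are small details that the paper leaves implicit.
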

\begin{proof}
Fix $t$. The two disjuncts in a block are mutually exclusive. If its region
equation holds, the block has exactly the completion $0^\ell$. If the equation
fails, its only completion is $w(E_i)$ when $E_i$ is YES, and it has no
completion when $E_i$ is NO. Thus every tag has at most one completion.
In the all-YES case every tag has the displayed completion $\chi(t)$.
Each coordinate of $\chi$ is affine in $t$, and retaining the tag coordinates
makes its graph injectively parametrized and of dimension $d$.
On $K_j$, all multipliers belonging to member $j$ vanish, so
$\chi(t)=\chi_j(t)$. If $E_j$ is NO, a completion exists precisely on $K_j$,
and it is $\chi_j(t)$. Its graph has dimension $\dim K_j$.
In particular, equality of witnesses, including $w(E_i)=0^\ell$, never merges
two satisfying assignments with different tags.
\end{proof}

\subsection{Regional testers and domination}
An advice set $X$ consists of $m-1$ distinct members of $\Yes_\ell$ and their
normalized witnesses. For a threshold $p'$, the deterministic tester $T_X$
operates as follows on a length-$\ell$ circuit $y$.
\begin{enumerate}[label=(\arabic*),leftmargin=2em]
\item If $y\in X$, accept.
\item Otherwise form the canonical $Q=X\cup\{y\}$ and let $j$ be the position
of $y$ in $Q$. Compute the common list
$B_{L(\ell)}(H_{\mathcal F}(Q))=(C'_1,\ldots,C'_J)$.
Using the advice witnesses, enumerate $R_j(Q)$ and put
\[
 \acc_X(y)=\frac1J|\{r:|\Sol(C'_r)\cap R_j(Q)|=1\}|.
\]
Accept precisely when $\acc_X(y)<p'$.
\end{enumerate}
An affine basis for $K_j$ is obtained from its equations by Gaussian
elimination; this basis is used only to enumerate the tester's region, and is
not supplied to $\I$. The tester has size
$\poly(\ell,m,d,L,J)2^{s_{\max}}$, including evaluation of output descriptions
of polynomial length. The comparison can hardwire the integer
$\lceil Jp'\rceil$, so no computability assumption on a real-valued threshold
is needed. Off the unique-versus-zero promise, its answer is unrestricted.

\begin{lemma}[Soundness]
\label{lem:poolsound}
If at least a $p'$ fraction of the common list isolates every one-NO gadget
input, then every tester $T_X$ rejects every $y\in\No_\ell$.
\end{lemma}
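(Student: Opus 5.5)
Fix an advice set $X$ and a circuit $y\in\No_\ell$. If $y\in X$ we would be in step (1), but $X$ consists only of members of $\Yes_\ell$ and the promise classes are disjoint, so $y\notin X$. The tester therefore executes step (2). It forms the canonical pool $Q=X\cup\{y\}$ of $m$ distinct length-$\ell$ descriptions, and $y=E_j$ is the unique NO member. By Lemma~\ref{lem:tagsets}(b),
\[
 \Sol(H_{\mathcal F}(Q))=R_j(Q),
\]
which is a nonempty affine space of dimension $\dim K_j\le s_{\max}\le d-1$, so in particular it is a nonempty affine input. The region $R_j(Q)$ computed by the tester is built only from $\mathcal F$, $j$, and the advice witnesses $w(E_i)$ for $i\ne j$. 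These are exactly the witnesses used in Lemma~\ref{lem:tagsets}, so the tester enumerates the true satisfying set of the compiled gadget.

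Next we would use the pruning property of every list entry from Lemma~\ref{lem:poollists}. Each $C'_r$ satisfies $\Sol(C'_r)\subseteq\Sol(H_{\mathcal F}(Q))=R_j(Q)$. Hence $\Sol(C'_r)\cap R_j(Q)=\Sol(C'_r)$, and the tester's count $|\{r:|\Sol(C'_r)\cap R_j(Q)|=1\}|$ is exactly the number of isolating entries of the common list on the input $H_{\mathcal F}(Q)$. Since $H_{\mathcal F}(Q)$ is a one-NO gadget input at description length $L(\ell)$, the hypothesis says this number is at least $p'J$. Therefore $\acc_X(y)\ge p'$, equivalently the count is at least $\lceil Jp'\rceil$, and the tester rejects.

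The only delicate point is the identification of the tester's region with the actual satisfying set. Two things must hold. First, the padding to $L(\ell)$ and the canonical ordering must be the same ones used when the hypothesis on the list is stated. Second, the advice witnesses must be genuine, which holds because $X$ is required to contain members of $\Yes_\ell$ together with their normalized witnesses. Both are definitional, so we expect no real obstacle. We would also note that no distinctness of witnesses is needed, since Lemma~\ref{lem:tagsets} already handles collisions.
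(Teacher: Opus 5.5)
Your proposal is correct and follows the paper's proof. Step (1) is excluded because $y\notin\Yes_\ell$; Lemma~\ref{lem:tagsets}(b) identifies the gadget's satisfying set with $R_j(Q)$; and pruning makes every isolating list entry have exactly one point in that region, so $\acc_X(y)\ge p'$ and the tester rejects. One harmless slip: your aside $\dim K_j\le s_{\max}\le d-1$ is not guaranteed by Definition~\ref{def:regions}, since a region may be all of $\F_2^d$, but the argument never uses that bound.
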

\begin{proof}
Step (1) cannot fire. In step (2), $y$ is the only NO member, so the entire
satisfying set is the nonempty region $R_j(Q)$ by Lemma~\ref{lem:tagsets}.
Each list entry prunes the input. Every isolating entry therefore has exactly
one point in this region, giving $\acc_X(y)\ge p'$.
\end{proof}

\begin{lemma}[Exact regional averaging]
\label{lem:regionalaverage}
For an all-YES pool $Q=\{E_1<\cdots<E_m\}$, let $\tau(t,x)=t$ and
$D_r=\Sol(C'_r)$. Then
\begin{equation}
 \frac1m\sum_{j=1}^m\acc_{Q\setminus\{E_j\}}(E_j)
 =\frac1J\sum_{r=1}^J f_{\mathcal F}(\tau(D_r)).
 \label{eq:regionalaverage}
\end{equation}
\end{lemma}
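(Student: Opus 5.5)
The identity (\ref{eq:regionalaverage}) comes from one observation: every designation of the pool runs on the same compiled circuit. For each $j$, the tester $T_{Q\setminus\{E_j\}}$ on input $E_j$ skips step~(1), because $E_j\notin Q\setminus\{E_j\}$ and the members are distinct. It reforms the canonical set $(Q\setminus\{E_j\})\cup\{E_j\}=Q$, so $j$ is the position of $E_j$ in $Q$. Hence the compiled description $H_{\mathcal F}(Q)$ and the deterministic list $B_{L(\ell)}(H_{\mathcal F}(Q))=(C'_1,\ldots,C'_J)$ are identical for all $m$ designations. The only thing that depends on $j$ is the region $R_j(Q)$ used in the count. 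The plan is to rewrite each $\acc_{Q\setminus\{E_j\}}(E_j)$ as a sum over $r$ and then interchange the sums over $j$ and $r$.

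The first step uses Lemma~\ref{lem:tagsets}(a). Since every member is YES, $S_Q$ is the graph of $\chi$ over $\F_2^d$, and $R_j(Q)=S_Q\cap\{t\in K_j\}$. Each $C'_r$ prunes its input, so $D_r\subseteq S_Q$, and therefore
\[
 D_r\cap R_j(Q)=\{(t,x)\in D_r : t\in K_j\}.
\]
The advice witnesses are genuine witnesses of the YES members. So the region the tester enumerates from its advice is exactly $R_j(Q)$ as defined with the true $w(E_i)$, and the identification is legitimate.

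The second step converts the count to tag space. The projection $\tau$ restricted to $S_Q$ is injective, because $S_Q$ is a graph over the tag coordinates. This is the place where witness collisions are harmless, as the lemma notes. Hence $\tau$ maps $D_r\cap R_j(Q)$ bijectively onto $\tau(D_r)\cap K_j$, which gives
\[
 |\Sol(C'_r)\cap R_j(Q)|=1 \iff |\tau(D_r)\cap K_j|=1 .
\]

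Finally, sum over $j$:
\[
 \frac1m\sum_j\acc(E_j)=\frac1{mJ}\sum_r\bigl|\{j:|\tau(D_r)\cap K_j|=1\}\bigr|=\frac1J\sum_r f_{\mathcal F}(\tau(D_r)).
\]
The indexing of regions $K_j$ matches the member positions $j$, and repeated regions are counted with multiplicity exactly as in Definition~\ref{def:regions}.

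The argument has no real obstacle. The only delicate point is verifying that the list is literally common to all designations. That requires canonical reformation of $Q$, a fixed padding $L(\ell)$, and the fact that $B_L$ depends on the description alone (Lemma~\ref{lem:poollists}). Combined with injectivity of $\tau$ on $S_Q$, this closes the proof.
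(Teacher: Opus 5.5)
Your proof is correct and follows the paper's argument. The steps match: a common compiled description and common seed list across all designations, $D_r\subseteq S_Q$ by pruning, injectivity of $\tau$ on $S_Q$ together with $R_j(Q)=S_Q\cap\{t\in K_j\}$ from Lemma~\ref{lem:tagsets}(a), and then interchange of the sums over $j$ and $r$; your remarks on why the list is literally common and why the advice-enumerated region is the true $R_j(Q)$ just make explicit what the paper compresses into one line.
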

\begin{proof}
Every designation skips step (1) and reaches the same compiled description
$H_{\mathcal F}(Q)$ and the same list. By pruning, $D_r\subseteq S_Q$.
Lemma~\ref{lem:tagsets} makes $\tau:S_Q\to\F_2^d$ a bijection and identifies
$R_j(Q)$ with the preimage of $K_j$. Hence
$|D_r\cap R_j(Q)|=|\tau(D_r)\cap K_j|$.
Sum the singleton indicators over $j,r$. Overlap is counted with its actual
multiplicity by $f_{\mathcal F}$; no disjointness or independence between the
regions is assumed.
\end{proof}

\begin{lemma}[Regional domination]
\label{lem:pooldom}
Let $\sigma\in[0,1]$ and suppose at least a $\sigma$ fraction of the common
list isolates every all-YES gadget; $\sigma=0$ imposes no success requirement.
If
\begin{equation}
 \max\{g,\sigma g+(1-\sigma)c'\}<p',
 \label{eq:regionalcondition}
\end{equation}
then every $S\subseteq\Yes_\ell$ with $N:=|S|\ge2m$ has an advice set
$X\subseteq S$ of size $m-1$ whose tester accepts at least $N/(2m)$ members.
For $\sigma=0$, the sufficient condition is simply $c<p'$.
\end{lemma}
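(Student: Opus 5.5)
We argue by averaging over all $m$-subsets of $S$ (the Ko--Ogihara-type domination count). Write $\mathcal Q$ for the family of $m$-subsets $Q\subseteq S$; each $Q\in\mathcal Q$ is an all-YES pool. Call a pair $(Q,j)$ \emph{good} if $\acc_{Q\setminus\{E_j\}}(E_j)<p'$, i.e.\ the tester with advice $X=Q\setminus\{E_j\}$ accepts $E_j$ in step~(2). The plan is to show that every $Q\in\mathcal Q$ has at least one good designation. A double count then finishes the argument.

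\textbf{Step 1: a good designation exists in every pool.} Fix $Q\in\mathcal Q$ and the common list $C'_1,\ldots,C'_J$ with $D_r=\Sol(C'_r)$. By Lemma~\ref{lem:regionalaverage}, the average over $j$ of $\acc_{Q\setminus\{E_j\}}(E_j)$ equals $\frac1J\sum_r f_{\mathcal F}(\tau(D_r))$. We bound each term.
\begin{itemize}
\item If $|D_r|=1$, then $\tau(D_r)$ is a singleton by injectivity of $\tau$ (Lemma~\ref{lem:tagsets}), so the term is at most $g$.
\item Otherwise $|\tau(D_r)|=|D_r|\ne1$, so the term is at most $c'$.
\end{itemize}
Let $\alpha\ge\sigma$ be the fraction of isolating entries for this $Q$. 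The average is then at most $\alpha g+(1-\alpha)c'$. This expression is affine in $\alpha\in[\sigma,1]$, so it is at most $\max\{g,\sigma g+(1-\sigma)c'\}<p'$ by \eqref{eq:regionalcondition}. Since the average over $j$ is below $p'$, some $j$ has $\acc<p'$, and that designation is good. For $\sigma=0$ the bound reads $\max\{g,c'\}=c<p'$, which recovers the stated special case.

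\textbf{Step 2: double counting.} Each $Q\in\mathcal Q$ contributes at least one good pair $(X,y)=(Q\setminus\{y\},y)$ with $X\subseteq S$, $|X|=m-1$, and $y\in S\setminus X$. Such pairs are in bijection with pairs $(Q,j)$, so the number of good pairs is at least $\binom Nm$. There are $\binom N{m-1}$ candidate advice sets $X$. Averaging, some $X$ is good for at least
\[
\binom Nm\Big/\binom N{m-1}=\frac{N-m+1}{m}
\]
members $y\in S\setminus X$, and $T_X$ accepts each of them. Since $N\ge2m$, we have $N-m+1\ge N/2$, so $T_X$ accepts at least $N/(2m)$ members. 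We do not need to count members of $X$ itself, although step~(1) accepts them too.

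\textbf{Main obstacle.} The delicate point is Step~1. We must make sure the success guarantee is used only through the list fraction $\sigma$ on all-YES gadgets, and that non-isolating entries are bounded by $c'$ even when $D_r=\emptyset$, where $f_{\mathcal F}(\emptyset)=0\le c'$. This step rests on the exact identity of Lemma~\ref{lem:regionalaverage} and on the tag bijection. We must also use that the list depends only on the compiled description $H_{\mathcal F}(Q)$, so that all designations of one pool share it; canonical ordering of $Q$ guarantees this.
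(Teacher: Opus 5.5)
Your proof is correct and follows the paper's argument. You use the same per-pool bound: the exact regional averaging identity, with singleton outputs bounded by $g$ and all other outputs by $c'$, then the affine maximization over $[\sigma,1]$; this is followed by the same count $\binom Nm m=\binom N{m-1}(N-m+1)$. The only difference is phrasing, since you write the second step as a direct double count of good pairs, while the paper samples a uniform pool and designation and transfers the distribution through the bijection $(Q,y)\mapsto(Q\setminus\{y\},y)$.
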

\begin{proof}
Fix an $m$-subset $Q$ and let $\sigma_Q\ge\sigma$ be its list's global
singleton fraction. Singleton outputs project to singleton tag sets and
contribute at most $g$ to~\eqref{eq:regionalaverage}; other outputs contribute
at most $c'$. Its right-hand side is at most
$\sigma_Qg+(1-\sigma_Q)c'$. The maximum of this affine function on
$\sigma_Q\in[\sigma,1]$ is the left-hand side of
\eqref{eq:regionalcondition}. Thus some designation is accepted.

Choose $Q$ uniformly among all $m$-subsets of $S$, and then $y$ uniformly in
$Q$. The acceptance probability is at least $1/m$.
The map $(Q,y)\mapsto(Q\setminus\{y\},y)$ is bijective, and
\[
 \binom Nm m=\binom N{m-1}(N-m+1).
\]
It therefore induces a uniform $(m-1)$-subset $X$ followed by a uniform
$y\in S\setminus X$. Averaging over $X$, some fixed advice set accepts at
least $(N-m+1)/m\ge N/(2m)$ members. There is no conditioning on a good-pool
event and no collision loss.
\end{proof}

\subsection{The regional collapse theorem}
\begin{theorem}[Regional collapse]
\label{thm:regional}
Let $\mathcal F_\ell$ be region families with tag dimension, number of regions,
and $2^{s_{\max}}$ polynomially bounded in $\ell$, with their presentations
supplied as advice. Let $L(\ell)\ge\ell$ be a common polynomial padding length.
Suppose a randomized nonuniform polynomial-size pruning transformation has
success at least $p_{\No}(\ell)\in[0,1]$ on every one-NO gadget and at least
$p_{\Yes}(\ell)\in[0,1]$ on every all-YES gadget, at length $L(\ell)$.
If, eventually,
\begin{equation}
 p_{\No}-\max\{g_\ell,p_{\Yes}g_\ell+(1-p_{\Yes})c'_\ell\}
 \ge\eta(\ell)\ge1/\poly(\ell),
 \label{eq:regionalcollapse}
\end{equation}
then $\NP\subseteq\Ppoly$. In particular, with $p_{\Yes}=0$, success
$c(\mathcal F_\ell)+\eta$ on the one-NO class suffices. Pruning is still
required on all inputs and all seeds.
\end{theorem}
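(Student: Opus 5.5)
The plan is to combine the fixed-list lemma, soundness, regional domination, and a Ko--Ogihara style greedy advice construction. This yields polynomial-size circuits for $\USAT$, and Lemma~\ref{lem:usatransfer} then gives $\NP\subseteq\Ppoly$.

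\emph{Step 1: fix the list and the threshold.} First apply Lemma~\ref{lem:poollists} at length $L(\ell)$ with $\gamma=\eta/3$, which is inverse-polynomial, to the union of the one-NO and all-YES gadget classes. This gives a single seed list $B_{L(\ell)}$ of polynomial length $J$. On it, at least a $p_{\No}-\gamma$ fraction of entries isolates every one-NO gadget, and at least a $\sigma:=\max\{0,p_{\Yes}-\gamma\}$ fraction isolates every all-YES gadget. Set $p'=p_{\No}-\gamma$, hardwired as $\lceil Jp'\rceil$. Lemma~\ref{lem:poolsound} then says every tester $T_X$ rejects every $y\in\No_\ell$. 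For domination we check \eqref{eq:regionalcondition}. Since $\sigma\ge p_{\Yes}-\gamma$ and $g\le c'$ or not, the expression $\sigma g+(1-\sigma)c'$ is affine in $\sigma$ with slope $g-c'\in[-1,1]$. Hence lowering $\sigma$ by at most $\gamma$ raises it by at most $\gamma$. By \eqref{eq:regionalcollapse}, $\max\{g,\sigma g+(1-\sigma)c'\}\le p_{\No}-\eta+\gamma<p'$. Lemma~\ref{lem:pooldom} therefore applies whenever $N\ge 2m$.

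\emph{Step 2: greedy covering.} Let $S_0=\Yes_\ell$, which has at most $2^\ell$ elements. Repeatedly, while $|S_k|\ge 2m$, pick the advice set $X_k\subseteq S_k$ from Lemma~\ref{lem:pooldom}. Its tester accepts at least a $1/(2m)$ fraction of $S_k$; remove those members to form $S_{k+1}$. Since $|S_{k+1}|\le(1-1/(2m))|S_k|$, after $O(m\ell)$ rounds fewer than $2m$ members remain. Hardwire these remaining members as an explicit list. The final circuit accepts $y$ if some $T_{X_k}$ accepts or $y$ is on the list.

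Completeness is immediate: every YES instance was removed by some tester or is on the list. Soundness follows from Step 1, since no tester accepts a NO instance, and the leftover list contains only YES descriptions.

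\emph{Step 3: size accounting.} Each tester has size $\poly(\ell,m,d,L,J)\,2^{s_{\max}}$. This is polynomial in $\ell$ by the hypotheses on $\mathcal F_\ell$ and because $L(\ell)$ and $J$ are polynomial. The tester also needs the advice presentations of $\mathcal F_\ell$, the nonuniform advice of $\I$, and the seed list. There are $O(m\ell)$ testers and fewer than $2m$ list entries, so the total is polynomial. Finitely many small $\ell$ are handled nonuniformly.

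This gives $\USAT\in\Ppoly$, and Lemma~\ref{lem:usatransfer} finishes the proof. The $p_{\Yes}=0$ case is the specialization $\sigma=0$, where the condition becomes $c<p'$.

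The main obstacle is Step 1: making sure a single list simultaneously preserves both the one-NO isolation rate and the all-YES rate up to $\gamma$, and that the $\gamma$ slack is absorbed correctly in the max. The Lipschitz-in-$\sigma$ observation above handles the second point, and Lemma~\ref{lem:poollists} handles the first, since it covers any union of classes.
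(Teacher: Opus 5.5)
Your proposal is correct and follows essentially the same route as the paper. It takes one common list from Lemma~\ref{lem:poollists} with $\gamma=\eta/3$, sets $p'=p_{\No}-\gamma$ and $\sigma=\max\{p_{\Yes}-\gamma,0\}$ with the slack absorbed by the $\gamma|c'-g|\le\gamma$ bound, and then runs greedy deletion via Lemma~\ref{lem:pooldom}, with soundness from Lemma~\ref{lem:poolsound}, a hardwired residual set, and Lemma~\ref{lem:usatransfer}. The stray phrase ``$g\le c'$ or not'' is harmless, since only $|g-c'|\le1$ is used.
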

\begin{proof}
Put $\gamma=\eta/3$. Apply Lemma~\ref{lem:poollists} once to the union of both
gadget classes, obtaining one common seed list. Set
$p'=p_{\No}-\gamma$ and $\sigma=\max\{p_{\Yes}-\gamma,0\}$.
Changing $p_{\Yes}$ to $\sigma$ changes the affine expression in
\eqref{eq:regionalcollapse} by at most $\gamma|c'-g|\le\gamma$; its maximum
with $g$ can increase by no more. Therefore
\[
 \max\{g,\sigma g+(1-\sigma)c'\}
 \le p_{\No}-\eta+\gamma<p_{\No}-\gamma=p'.
\]
The list gives soundness and the hypotheses of Lemma~\ref{lem:pooldom}.

Start with $S_0=\Yes_\ell$. While $|S_j|\ge2m$, choose an advice set supplied
by that lemma and delete every input accepted by its tester. Each round
removes a fraction at least $1/(2m)$. Since $|\Yes_\ell|\le2^\ell$,
$\lceil2m\ell\rceil+1$ rounds suffice. Hardwire the advice sets, their
witnesses, and the residual set of fewer than $2m$ descriptions. Accept an
input if it is residual or any tester accepts. Lemma~\ref{lem:poolsound}
gives soundness, and deletion gives completeness.
There are $O(m\ell)$ testers with $O(m\ell)$ witness-advice bits each,
$J=O((L+1)/\gamma^2)$ list entries, and at most $2^{s_{\max}}$ evaluations
per entry. Every quantity is polynomial in $\ell$.
Thus $\USAT$ has polynomial-size circuits, and Lemma~\ref{lem:usatransfer}
yields the collapse. Finitely many source lengths are handled by advice.
\end{proof}

\begin{remark}[Disjoint regions recover the old pool threshold]
\label{rem:disjoint}
For $m\ge2$ pairwise disjoint nonempty regions, $g=1/m$ and $c'=1$: choose
one point from each region. With the same success lower bound $p$ on both
classes, the strict condition is $p>m/(2m-1)$, recovering
$(k+1)/(2k+1)$ at $m=k+1$. The original all-pairs pool realizes these regional
quantities. Its one-half limit is thus a limitation of the disjoint-region
count, not an obstacle to overlapping multiscale families.
Overlap alone is not sufficient: identical regions, for example, still have
$c=1$.
For any family with $1+c'-g>0$, the symmetric scalar threshold is
\[
 \rho(\mathcal F)=\max\left\{g,\frac{c'}{1+c'-g}\right\}.
\]
The degenerate case $g=1,c'=0$ gives no strict threshold in $[0,1]$.
\end{remark}

\subsection{Explicit families}
\begin{proposition}[All affine lines]
\label{prop:lines}
Let $d\ge2$, $N=2^d$, and let $\mathcal F$ contain each affine line of
$\F_2^d$ once. Then
\[
 f_{\mathcal F}(T)=\frac{|T|(N-|T|)}{\binom N2},\qquad
 g=\frac2N,\qquad c'=\frac{N}{2(N-1)},
\]
and the symmetric threshold is
\[
 \rho_N=\frac{N^2}{3N^2-6N+4}\le\frac13+\frac4{3N}.
\]
Success $\rho_N+\eta$ on all affine inputs with exactly $2$ or $N$ witnesses,
for an inverse-polynomially bounded-below margin $\eta$, implies
$\NP\subseteq\Ppoly$. More generally it suffices that the respective
success bounds satisfy
\[
 p_2+\frac{(N-2)^2}{2N(N-1)}p_N
 \ge\frac{N}{2(N-1)}+\eta.
\]
These hypotheses are understood at all sufficiently large description lengths.
\end{proposition}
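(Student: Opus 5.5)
The plan is to compute the three regional quantities of Definition~\ref{def:regions} directly, and then feed them into Theorem~\ref{thm:regional} with $p_{\No}=p_2$ and $p_{\Yes}=p_N$.

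\emph{The counting function.} Over $\F_2$, an affine line of $\F_2^d$ is exactly an unordered pair of distinct points. So $\mathcal F$ is the family of all $\binom N2$ two-element subsets of $\F_2^d$. A pair meets $T$ in exactly one point iff one point lies in $T$ and the other does not. There are $|T|(N-|T|)$ such pairs, which gives the formula for $f_{\mathcal F}(T)$. Setting $|T|=1$ gives $g=(N-1)/\binom N2=2/N$. For $c'$ we maximize $k(N-k)$ over $k\ne1$. Since $N\ge4$ is even, $k=N/2$ is allowed and gives $N^2/4$, so $c'=N/(2(N-1))$.

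\emph{The threshold.} Next compute
\[
 1+c'-g=\frac{3N^2-6N+4}{2N(N-1)} \quad\text{and}\quad c'-g=\frac{(N-2)^2}{2N(N-1)}.
\]
By Remark~\ref{rem:disjoint}, $\rho(\mathcal F)=\max\{g,c'/(1+c'-g)\}$, and $c'/(1+c'-g)=N^2/(3N^2-6N+4)$. We must check that this term dominates $g$, i.e.\ $N^3\ge2(3N^2-6N+4)$. This holds at $N=4$ ($64\ge56$), and the difference is increasing for $N\ge4$. For the bound $\rho_N\le1/3+4/(3N)$, subtract $1/3$ to get $(6N-4)/(3(3N^2-6N+4))$. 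The bound then reduces to $6N^2-20N+16\ge0$, whose roots are $4/3$ and $2$, so it holds for $N\ge4$.

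\emph{Applying the collapse theorem.} Take $\mathcal F_\ell=\mathcal F$ constant in $\ell$, with fixed $d$. Then $m=\binom N2$, the tag dimension $d$, and $2^{s_{\max}}=2$ are all constants, and the presentations are finite advice. Use the padding $L(\ell)=\ell^2$ fixed after Definition~\ref{def:taggadget}. By Lemma~\ref{lem:tagsets}, each one-NO gadget has an affine satisfying set of dimension $\dim K_j=1$, so exactly $2$ witnesses. Each all-YES gadget has dimension $d$, so exactly $N$ witnesses. The hypotheses at all sufficiently large description lengths therefore give success $p_2$ on the one-NO class and $p_N$ on the all-YES class at length $L(\ell)$, for all large $\ell$.

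Condition~\eqref{eq:regionalcollapse} asks that $p_2\ge\max\{g,\;c'-p_N(c'-g)\}+\eta$. Since $p_N\le1$ and $c'\ge g$, the second term is at least $g$, so the maximum is just $c'-p_N(c'-g)$. Substituting the value of $c'-g$ gives exactly the displayed general inequality. In the symmetric case $p_2=p_N=p$, this becomes $p(1+c'-g)\ge c'+\eta$. That follows from $p\ge\rho_N+\eta$, because $1+c'-g\ge1$ and $\rho_N\ge c'/(1+c'-g)$.

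\emph{Expected obstacle.} The arithmetic is routine. The step needing the most care is the bookkeeping of lengths. The hypothesis is phrased at all large description lengths $L$, while the theorem uses only the padded lengths $L(\ell)=\ell^2$. One must check that the padded gadgets are genuinely inputs of those lengths, with unchanged variables and witness counts. One must also check that the margin $\eta$ remains inverse-polynomial in $\ell$, which holds because $L$ is polynomial in $\ell$.
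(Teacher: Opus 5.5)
Your proposal is correct and follows the paper's proof essentially step for step. You identify lines with unordered pairs to get $f_{\mathcal F}$, $g$, $c'$, use $c'-g=(N-2)^2/(2N(N-1))\ge0$ to resolve the maximum in \eqref{eq:regionalcollapse} as $p_Ng+(1-p_N)c'$, and apply Theorem~\ref{thm:regional} with $p_{\No}=p_2$, $p_{\Yes}=p_N$. The only differences are cosmetic: you check $c'/(1+c'-g)\ge g$ explicitly, and you bound $\rho_N$ via $6N^2-20N+16\ge0$ rather than the paper's $3N^2-6N+4\ge3N^2/2$.
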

\begin{proof}
Over $\F_2$ every two distinct points form a line. A line meets $T$ once
exactly when its endpoints lie on opposite sides of $T$. This proves the
formula for $f$. For singletons it is $2/N$. For nonsingletons it is maximized
at $|T|=N/2\ge2$, giving $c'$. Since
$c'-g=(N-2)^2/(2N(N-1))\ge0$, the maximum in
\eqref{eq:regionalcollapse} is $p_Ng+(1-p_N)c'$. The displayed asymmetric
condition therefore implies the theorem. Setting $p_2=p_N=p$ gives
$p>c'/(1+c'-g)=\rho_N$; the inverse-polynomial excess supplies its margin.
The algebraic identity
\[
 \rho_N-\frac13=\frac{6N-4}{3(3N^2-6N+4)}
\]
and $3N^2-6N+4\ge3N^2/2$ for $N\ge4$ yield the upper bound.
For fixed $d$, the family size and enumeration costs are constants apart
from polynomial factors in the source length, as required by
Theorem~\ref{thm:regional}.
\end{proof}

For $N=4,8,16$, the thresholds are $4/7$, $16/37$, and $64/169$.
In particular, every fixed guarantee strictly above $1/3$ is obstructed on
an appropriate constant-witness affine class.

\Needspace{11\baselineskip}
\begin{proposition}[Points and lines of $\F_2^2$]
\label{prop:pointsandlines}
Let $\mathcal F$ contain the four points and six lines of $\F_2^2$, once each.
Then
\[
 f_{\mathcal F}(T)=\frac{|T|(5-|T|)}{10},\qquad
 c=c'=\frac35,\qquad g=\frac25.
\]
Success $3/5+\eta$ on all affine inputs with one or two witnesses implies
$\NP\subseteq\Ppoly$. If success is also required on four-witness inputs,
a symmetric guarantee $1/2+\eta$ suffices. In both cases the margin is
inverse-polynomially bounded below at all sufficiently large lengths.
\end{proposition}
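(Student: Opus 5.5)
We compute $f_{\mathcal F}(T)$ directly, then read off $g,c',c$ and apply Theorem~\ref{thm:regional} and Remark~\ref{rem:disjoint}. The tag dimension is $d=2$ with $N=4$ tags, and the family has $m=10$ regions.

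\emph{Computing $f_{\mathcal F}$.} For a set $T$ with $k=|T|$, the point regions meeting $T$ in exactly one point are the $k$ points of $T$. By the argument of Proposition~\ref{prop:lines}, the lines meeting $T$ exactly once are those with one endpoint in $T$ and one outside, and there are $k(4-k)$ of them. Hence
\[
f_{\mathcal F}(T)=\frac{k+k(4-k)}{10}=\frac{k(5-k)}{10}.
\]
Tabulating over $k=0,\dots,4$ gives $0,\tfrac25,\tfrac35,\tfrac35,\tfrac25$. So $g=2/5$, attained at $k=1$, and $c'=3/5$, attained at $k=2,3$. Therefore $c=3/5$.

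\emph{First claim.} The one-NO gadgets have satisfying sets $R_j(Q)$ of dimension $\dim K_j\in\{0,1\}$ by Lemma~\ref{lem:tagsets}(b). These are nonempty affine inputs with one or two witnesses. Their length is the padded common length $L(\ell)$, and the family is of constant size, so the polynomial bounds of Theorem~\ref{thm:regional} hold. Take $p_{\Yes}=0$ and $p_{\No}=3/5+\eta$. Condition~\eqref{eq:regionalcollapse} then reads $p_{\No}-c\ge\eta$, and the theorem yields $\NP\subseteq\Ppoly$. No success is needed on the four-witness all-YES gadgets; only pruning is needed there, which holds on every input.

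\emph{Second claim.} If success $p$ is also required on four-witness inputs, the all-YES gadgets are affine of dimension $2$ with $4$ witnesses by Lemma~\ref{lem:tagsets}(a). Take $p_{\Yes}=p_{\No}=p$. Since $c'\ge g$, the maximum in \eqref{eq:regionalcollapse} is $pg+(1-p)c'$, so the condition becomes
\[
p-\tfrac25p-\tfrac35(1-p)=\tfrac65p-\tfrac35\ge\eta' .
\]
This matches the symmetric threshold $\rho=c'/(1+c'-g)=(3/5)/(6/5)=1/2$ of Remark~\ref{rem:disjoint}. With $p=1/2+\eta$, the margin is $\tfrac65\eta$, which is inverse-polynomially bounded below, so Theorem~\ref{thm:regional} again applies. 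Finitely many exceptional lengths are absorbed into advice, as in that theorem.

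\emph{Main obstacle.} There is little real difficulty. The only care needed is the bookkeeping that one-NO gadgets are genuinely affine inputs of the stated witness counts at the padded length, and that the description-length convention of Theorem~\ref{thm:regional} covers them. Lemma~\ref{lem:tagsets} and the fixed-$d$ padding discussion after Definition~\ref{def:taggadget} handle both points.
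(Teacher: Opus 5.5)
Your proposal is correct and follows essentially the same route as the paper: count the $|T|$ point regions and the $|T|(4-|T|)$ separating lines to get $f_{\mathcal F}$, then apply Theorem~\ref{thm:regional} with $p_{\Yes}=0$ for the first claim and with $p_{\Yes}=p_{\No}$ for the second. For the second claim you verify the margin $\tfrac65\eta$ directly from \eqref{eq:regionalcollapse}, whereas the paper computes $c'/(1+c'-g)=1/2>g$; these are the same check, and yours is slightly more explicit.
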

\begin{proof}
A set of $u$ tags contains $u$ point regions and meets $u(4-u)$ lines once.
The values of $f$ at $u=0,1,2,3,4$ are respectively
$0,2/5,3/5,3/5,2/5$. One-NO gadgets have one or two witnesses.
Use $p_{\Yes}=0$ in Theorem~\ref{thm:regional} for the first claim.
For the second, $c'/(1+c'-g)=(3/5)/(6/5)=1/2>g$.
No isolation promise is used on the four-witness all-YES gadgets in the first
claim, but their outputs must still prune those gadgets.
\end{proof}

\begin{lemma}[Multiscale affine-flag sum]
\label{lem:scalesum}
For each $0\le s<d$, let $K_s$ be a uniformly random affine $s$-subspace of
$\F_2^d$. For every fixed $T\subseteq\F_2^d$,
\begin{equation}
 \sum_{s=0}^{d-1}\Pr[|T\cap K_s|=1]\le2-2^{1-d}\le2.
 \label{eq:scalesum}
\end{equation}
Only the uniform marginal laws are needed.
\end{lemma}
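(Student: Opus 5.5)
The plan is to couple the $d$ random subspaces into one random affine flag, so that the sum in~\eqref{eq:scalesum} becomes the expected length of a single run of levels. Since the statement involves only the marginal laws of the $K_s$, the sum $\sum_s\Pr[|T\cap K_s|=1]$ is unchanged under any coupling. By linearity it then equals $\mathbb E\bigl[|\{s<d:|T\cap K_s|=1\}|\bigr]$ for whichever joint law we choose.

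\emph{Step 1: the coupling.} Build a random chain top-down. Put $K_d=\F_2^d$. Given $K_{s+1}$, let $K_s$ be a uniformly random affine hyperplane of $K_{s+1}$, that is, a uniformly random one of the $2(2^{s+1}-1)$ affine $s$-subspaces contained in $K_{s+1}$. By induction on $d-s$, each $K_s$ is marginally uniform among affine $s$-subspaces of $\F_2^d$. The reason is that the affine group of $\F_2^d$ acts transitively on pairs (an $(s+1)$-space, an $s$-subspace inside it), and the construction is equivariant under that action.

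\emph{Step 2: monotonicity and a single run.} Because the chain is nested, $a_s:=|T\cap K_s|$ is nondecreasing in $s$. Hence the set of levels $s<d$ with $a_s=1$ is a (possibly empty) interval $\{b,\ldots,a\}$ with $a\le d-1$. Read top-down, it is entered at the first level $a$ where the count reaches $1$, and it continues downward as long as the unique point $t\in T\cap K_{a}$ stays in the next hyperplane. While $t$ stays, no other point of $T$ can appear, again by nestedness.

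\emph{Step 3: geometric continuation.} Condition on the history down to level $s+1$, with $T\cap K_{s+1}=\{t\}$. A uniformly random affine hyperplane of the $2^{s+1}$-point space $K_{s+1}$ contains $t$ with probability exactly $1/2$: each direction hyperplane has two cosets, and exactly one of them contains $t$. This probability does not depend on the past, by the Markov construction. Therefore, given that the run starts at level $a$, its expected length is
\[
\sum_{k=0}^{a}2^{-k}=2-2^{-a}\le2-2^{1-d},
\]
since the run is truncated at level $0$ and $a\le d-1$. Averaging over the entrance event, which has probability at most one, gives~\eqref{eq:scalesum}.

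The main point needing care is Step 3. The conditioning must be on a stopping-time event, namely that level $a$ is the first top-down level with count $1$. We need the Markov property of the top-down construction to ensure that the next hyperplane is uniform given everything above it. The marginal-uniformity check in Step 1 is routine by transitivity, and the truncation at level $0$ is exactly what produces the $-2^{1-d}$ term.
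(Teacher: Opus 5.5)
Your proposal is correct and follows essentially the same route as the paper: a top-down random affine flag with uniform marginals, exact retention probability $1/2$ for the unique point at each hyperplane step, conditioning on the first singleton level $a\le d-1$ to get $\sum_{k=0}^{a}2^{-k}=2-2^{-a}\le 2-2^{1-d}$, and linearity of expectation to transfer the bound to any joint law with the same marginals. Your remarks on the stopping-time conditioning and the Markov property spell out a step that the paper states only briefly.
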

\begin{proof}
This is the affine-flag argument used in~\cite{BD26}; we give the short
specialization needed here. Choose $K_{d-1}$ uniformly and successively choose
a uniform affine hyperplane of the preceding space, down to $K_0$.
Affine symmetry gives the required uniform marginals. If a point lies in an
affine space of positive dimension, a uniform affine hyperplane contains it
with probability $1/2$, since exactly one of the two translates of every
codimension-one direction space contains that point.

Along the descending chain, intersections with $T$ can only shrink.
Condition on the first singleton intersection occurring at dimension $j$.
Every subsequent step preserves its one point with probability $1/2$; after
it is lost, no later intersection can become nonempty. The conditional
expected number of singleton levels is therefore
$1+1/2+\cdots+2^{-j}\le2-2^{1-d}$.
If no singleton is reached the count is zero. Averaging proves the bound for
the chain, and linearity of expectation transfers it to any coupling with
the same marginals.
\end{proof}

Let $A_s$ be the number of affine $s$-subspaces of $\F_2^d$, and put
$M=\operatorname{lcm}(A_0,\ldots,A_{d-1})$.
Define $\mathcal F_d$ by listing every $s$-flat $M/A_s$ times.
Each dimension has total weight $1/d$, so
\begin{equation}
 c(\mathcal F_d)=\max_T\frac1d\sum_{s=0}^{d-1}
                  \Pr[|T\cap K_s|=1]\le\frac2d.
 \label{eq:multiscale}
\end{equation}
For fixed $d$, the family may be very large but is a constant-size object
independent of $\ell$. Its regions have at most $2^{d-1}$ points.

\begin{theorem}[Every positive constant]
\label{thm:constants}
Let $0<\epsilon\le1$ and $d=\lfloor2/\epsilon\rfloor+1$.
If a randomized nonuniform polynomial-size pruning transformation isolates
every affine input with at most $2^{d-1}$ satisfying assignments with
probability at least $\epsilon$, at every sufficiently large description
length, then $\NP\subseteq\Ppoly$. The success guarantee is used only on the
one-NO gadgets of $\mathcal F_d$.
\end{theorem}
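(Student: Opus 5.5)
We apply Theorem~\ref{thm:regional} to the constant family $\mathcal F_\ell:=\mathcal F_d$ with $p_{\Yes}=0$. The whole argument then reduces to the regional bound~\eqref{eq:multiscale} together with the choice of $d$.

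\textbf{Step 1: parameters.} For fixed $\epsilon$, $d=\lfloor2/\epsilon\rfloor+1$ is a constant. The family $\mathcal F_d$ is a fixed finite object: its size $m$ is $M$ times the number of dimensions, $s_{\max}=d-1$, and $2^{s_{\max}}=2^{d-1}$. All of these are constants, hence polynomially bounded in $\ell$. The presentations are hardwired as advice. The gadget $H_{\mathcal F_d}(Q)$ has unpadded description length $O(\ell\log\ell)$, so the common padding $L(\ell)=\ell^2$ from the discussion after Definition~\ref{def:taggadget} applies once $\ell$ is large.

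\textbf{Step 2: the promise covers the one-NO gadgets.} By Lemma~\ref{lem:tagsets}(b), a one-NO gadget with NO member $E_j$ has as its satisfying set the nonempty affine space $R_j(Q)$, of dimension $\dim K_j\le d-1$. It therefore has at most $2^{d-1}$ satisfying assignments. The hypothesis then gives success $p_{\No}\ge\epsilon$ at every sufficiently large description length $L(\ell)$. The all-YES gadgets have $2^d$ witnesses and lie outside the promise, so we take $p_{\Yes}=0$. They still appear in the list of Lemma~\ref{lem:poollists}, and pruning holds on them for every seed, which is all Theorem~\ref{thm:regional} needs.

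\textbf{Step 3: the margin.} By~\eqref{eq:multiscale}, $c(\mathcal F_d)\le2/d$. Since $d>2/\epsilon$, we get $2/d<\epsilon$. Hence
\[
\eta:=\epsilon-\frac{2}{d}>0
\]
is a positive constant depending only on $\epsilon$, and in particular it is bounded below by $1/\poly(\ell)$. With $p_{\Yes}=0$, condition~\eqref{eq:regionalcollapse} reads $p_{\No}-\max\{g,c'\}=p_{\No}-c\ge\epsilon-2/d=\eta$. Theorem~\ref{thm:regional} therefore yields $\NP\subseteq\Ppoly$, and the success guarantee was invoked only on one-NO gadgets of $\mathcal F_d$.

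\textbf{Main obstacle.} The substantive content sits in Lemma~\ref{lem:scalesum} and its transfer to~\eqref{eq:multiscale}. The point to check is that $c$ includes singleton sets $T$, so $g\le2/d$ is also covered: the bound $\max_T$ ranges over all $T$, including $|T|=1$. A second check is that equal weighting across dimensions via the lcm multiplicities turns $f_{\mathcal F_d}(T)$ into exactly $\frac1d\sum_s\Pr[|T\cap K_s|=1]$, which is immediate from the construction. The remaining bookkeeping is to confirm that the strict inequality $d>2/\epsilon$ holds even when $2/\epsilon$ is an integer; the choice $\lfloor\cdot\rfloor+1$ guarantees this.
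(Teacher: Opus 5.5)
Your proposal is correct and is the paper's own proof: apply Theorem~\ref{thm:regional} to the fixed family $\mathcal F_d$ with $p_{\Yes}=0$, $p_{\No}=\epsilon$, padding $L(\ell)=\ell^2$, and constant margin $\eta=\epsilon-2/d>0$ from~\eqref{eq:multiscale}. Your extra checks fill in bookkeeping that the paper's three-line proof leaves implicit: one-NO gadgets have $2^{\dim K_j}\le2^{d-1}$ witnesses, $c$ also bounds singleton $T$, the lcm weighting gives exact per-dimension averages, and $d>2/\epsilon$ holds strictly.
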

\begin{proof}
The one-NO inputs have $2^s$ witnesses for $0\le s<d$. The fixed family's
parameters satisfy all resource requirements of Theorem~\ref{thm:regional},
and $L(\ell)=\ell^2$ is valid eventually. Use $p_{\Yes}=0$,
$p_{\No}=\epsilon$, and the positive constant margin
$\eta=\epsilon-2/d$ in that theorem.
\end{proof}

\begin{corollary}[Constant-success equivalence]
\label{cor:constants}
For every fixed $0<\epsilon\le1$, an efficient nonuniform pruning
transformation with affine-input guarantee $\epsilon$ exists if and only if
$\NP\subseteq\Ppoly$.
\end{corollary}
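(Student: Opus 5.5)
The corollary is an equivalence, so I would prove the two directions separately. The forward direction is Theorem~\ref{thm:constants}. An affine-input guarantee $\epsilon$ holds on every nonempty affine input. In particular it holds on those with at most $2^{d-1}$ satisfying assignments, $d=\lfloor2/\epsilon\rfloor+1$, at every sufficiently large description length. The theorem then yields $\NP\subseteq\Ppoly$ directly. Nothing beyond checking the hypotheses is needed here. The check is that a guarantee at every length in the sense of Definition~\ref{def:pruning} gives the eventual guarantee the theorem asks for.

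For the converse, assume $\NP\subseteq\Ppoly$ and build a deterministic nonuniform pruning transformation that isolates every nonempty affine input with probability one. Success probability one is at least $\epsilon$ for every $\epsilon\le1$. The plan is search-to-decision via self-reduction. Consider the $\NP$ language of pairs $(\langle H\rangle,\pi)$ such that $H$ has a satisfying assignment extending the prefix $\pi$. Under the assumption it has polynomial-size circuits, and at each length we hardwire them. On input $\langle H\rangle$ with $n$ variables, fix the bits $\pi_1,\pi_2,\ldots$ one at a time. At each step, take $\pi_{k}=0$ if the circuit says the prefix $\pi_1\cdots\pi_{k-1}0$ extends, and $\pi_k=1$ otherwise. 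This produces a candidate $x^*$. The output circuit is
\[
 \I(H;\rho)(x)=H(x)\wedge[x=x^*],
\]
which is built from the description of $H$ and the constant $x^*$. Its size is polynomial in $L$ and it uses no randomness.

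Pruning then holds on every input, because $\Sol(\I(H;\rho))\subseteq\Sol(H)$ by the conjunct $H(x)$. This matters: Definition~\ref{def:pruning} requires pruning even off the promise, and even when the candidate $x^*$ is garbage. If $H$ is satisfiable, correctness of the decision circuits on every prefix query keeps the invariant that the current prefix extends. So $x^*\in\Sol(H)$ and the output has exactly one satisfying assignment. Affine inputs are nonempty by definition, so we get success probability one.

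The only step needing care is the length bookkeeping. The prefix queries have length polynomial in $L$. Their decision circuits must therefore be hardwired for all query lengths that arise from a length-$L$ input, which is again polynomial advice. This is routine, so I expect no real obstacle. The substantive content of the corollary lies entirely in Theorem~\ref{thm:constants}.
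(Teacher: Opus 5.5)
Your proposal is correct and follows the paper's proof: the forward direction is Theorem~\ref{thm:constants}, and the converse is prefix self-reduction with polynomial-size SAT circuits, giving a deterministic transformation with success one. The only difference is how you ensure pruning. You conjoin the output with $H$, whereas the paper first tests satisfiability and outputs the empty predicate on unsatisfiable inputs; either choice prunes on every input.
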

\begin{proof}
The forward implication is Theorem~\ref{thm:constants}. Conversely,
polynomial-size SAT circuits support prefix self-reduction: first test
satisfiability, then set each witness bit to zero whenever a satisfying
extension exists and to one otherwise. Output the singleton predicate of the
lexicographically first witness, or the empty predicate on an unsatisfiable
input. This has success one and prunes on every seed.
\end{proof}

\begin{theorem}[Logarithmic threshold]
\label{thm:logarithmic}
If a randomized nonuniform polynomial-size pruning transformation isolates
every affine input with at most $L^{1/3}$ satisfying assignments with
probability at least $10/\log L$, at every sufficiently large description
length $L$, then $\NP\subseteq\Ppoly$.
\end{theorem}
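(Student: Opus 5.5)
We apply Theorem~\ref{thm:regional} with $p_{\Yes}=0$, using a tag dimension $d=d(\ell)$ that grows logarithmically with the source length. The family $\mathcal F_d$ of \eqref{eq:multiscale} cannot be used as is: for growing $d$ it has superpolynomially many regions, since $M=\operatorname{lcm}(A_0,\ldots,A_{d-1})$ is enormous. We therefore replace it by a polynomial-size family that approximates the uniform multiscale distribution.

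\emph{Step 1 (parameters).} Choose $d=\lfloor\tfrac13\log\ell\rfloor$ or slightly smaller. Then every region has at most $2^{d-1}\le\ell^{1/3}/2$ points, so $2^{s_{\max}}$ is polynomial. The gadget description length is $L=\ell^2$ after padding, so one-NO gadgets have at most $2^{d-1}\le L^{1/3}$ witnesses and fall under the hypothesis. The hypothesized success is $10/\log L=5/\log\ell$. For $\eta$ to be inverse-polynomial it suffices that $c(\mathcal F_\ell)\le 4/\log\ell$. Since $2/d\approx 6/\log\ell$ for $d\approx\tfrac13\log\ell$, I take $d$ such that $2/d$ plus the sampling error is at most $4.5/\log\ell$. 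For example, $d\approx\tfrac12\log\ell$ gives $2/d\approx 4/\log\ell$ with room for error, while $2^{d-1}\le\ell^{1/2}/2\le L^{1/3}$ still holds because $L=\ell^2$ gives $L^{1/3}=\ell^{2/3}$. I will fix the constants so that $2/d+\delta\le 4.5/\log\ell$ with an error $\delta=1/\log^2\ell$, say.

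\emph{Step 2 (polynomial family by sampling).} For each $s<d$, draw $k=\poly(\ell)$ independent uniform affine $s$-flats, and list all $dk$ of them, giving each dimension weight $1/d$. For a fixed $T\subseteq\F_2^d$, Lemma~\ref{lem:scalesum} shows that the expectation of $f_{\mathcal F}(T)$ is at most $2/d$. Hoeffding's inequality bounds the probability that $f_{\mathcal F}(T)$ exceeds $2/d+\delta$ by $\exp(-2dk\delta^2)$. There are $2^{2^d}\le 2^{\ell^{1/2}}$ sets $T$, so taking $k=\ell$ makes the union bound less than one. Some realization therefore satisfies $c(\mathcal F_\ell)\le 2/d+\delta$ simultaneously for all $T$, which covers both $g$ and $c'$. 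Its presentation is supplied as advice, which Theorem~\ref{thm:regional} permits, and $m=dk$ and $d$ are polynomial.

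\emph{Step 3 (conclusion).} The margin is $\eta=5/\log\ell-(2/d+\delta)\ge\Omega(1/\log\ell)$, which is inverse-polynomial. Theorem~\ref{thm:regional} then yields $\NP\subseteq\Ppoly$. The main obstacle is the bookkeeping in Step 1. One has to check that $10/\log L$ really exceeds $2/d$ after converting between $L$ and $\ell$, with padding exactly $\ell^2$. One also has to check that the witness bound $2^{d-1}\le L^{1/3}$ and the condition $n\to\infty$ are compatible, and that integrality of $d$ and the finitely many small lengths are absorbed by advice. If the constant $10$ turns out too tight under padding $\ell^2$, I would pad to a smaller power such as $\ell^{1+o(1)}$, provided the unpadded $O(\ell\log\ell\cdot md)$ length allows it, and rebalance $d$ accordingly.
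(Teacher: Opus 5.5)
Your route is the paper's: you replace the superpolynomial $\mathcal F_d$ by a polynomially large random sample of flats, combine Lemma~\ref{lem:scalesum} with Hoeffding and a union bound over all $2^{2^d}$ tag sets, fix one family with uniformly small $c$ as advice, and apply Theorem~\ref{thm:regional} with $p_{\Yes}=0$. Sampling $k$ flats per dimension instead of i.i.d.\ dimensions, and using an additive slack $\delta$ instead of the paper's $1/d$, are harmless.

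The step that fails is the bookkeeping you deferred. The paper justifies the padding $L(\ell)=\ell^2$ only for a \emph{fixed} family. With $k=\ell$ your family has $m=d\ell$ regions and $R=\sum_i r_i=\ell\,d(d+1)/2$ guarded blocks, each with $\ell$ witness variables. The gadget therefore has $n=d+R\ell=\Theta(\ell^2\log^2\ell)>\ell^2$ variables, while a length-$L$ description specifies at most $L$ variables. So none of your gadgets has length $\ell^2$. Your margin computation uses $10/\log L=5/\log\ell$, which holds only at that length. For your $d\approx\frac12\log\ell$, padding to $\ell^{5/2}$ would already give $10/\log L=4/\log\ell\le 2/d$, leaving no margin. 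Your fallback of padding to $\ell^{1+o(1)}$ is also unavailable, since padding only lengthens and $L\ge n\ge m\ell$.

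The repair is easy. The union bound only needs $2dk\delta^2>2^d\ln 2$. With $2^d\le\ell^{1/2}$ and $\delta=1/\log^2\ell$, the choice $k=\lceil\ell^{1/2}\log^4\ell\rceil$ suffices. Then $R\ell=\tilde O(\ell^{3/2})$ and the gadget fits in length $\ell^2$. Your remaining arithmetic goes through with margin $\Omega(1/\log\ell)$: eventually $2/d+\delta\le 4.5/\log\ell$, and $2^{d-1}\le\ell^{1/2}/2\le L^{1/3}$.

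The paper balances the parameters differently:
\begin{itemize}
\item $d=\lfloor\log\ell\rfloor$ and $m=2^d d^2$ i.i.d.\ regions;
\item Hoeffding deviation $1/d$, so $c\le 3/d$, with failure probability $e^{-2\cdot 2^d}$ against $2^{2^d}$ sets;
\item unpadded length $O(\ell^2\log^4\ell)$, padded to $L=\ell^3$, with regions of at most $2^{d-1}\le\ell/2\le L^{1/3}$ points;
\item with $x=\log L$, the bounds $c\le 9/(x-3)$ and $10/x-9/(x-3)\ge 1/(2x)$ for $x\ge 57$.
\end{itemize}
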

\begin{proof}
Set $d=\lfloor\log\ell\rfloor$, $N=2^d\le\ell$, and $m=Nd^2$.
Sample $m$ independent regions by first choosing $s$ uniformly in
$\{0,\ldots,d-1\}$ and then a uniform affine $s$-subspace.
For any fixed $T\subseteq\F_2^d$, the indicators of singleton intersection
are independent across the sampled regions, with mean at most $2/d$ by
Lemma~\ref{lem:scalesum}. Hoeffding's inequality bounds the probability of
an empirical frequency above $3/d$ by
$\exp(-2m/d^2)=e^{-2N}$. A union bound over all $2^N$ possible retained tag
sets is less than one. Thus there exists one family $\mathcal F_\ell$ with
$c(\mathcal F_\ell)\le3/d$ simultaneously for every $T$. Fix it as advice;
no efficient sampling or recognition of a good family is asserted or needed.

The family has at most $md=Nd^3$ guarded blocks. Its unpadded length is
$O(\ell^2\log^4\ell)$, so pad to $L(\ell)=\ell^3$ at sufficiently large
source lengths. Every designated region has at most $2^{d-1}\le\ell/2$
points, at most $L^{1/3}$, and can be enumerated in polynomial time.
The family description, including all linear equations, has
$O(md^2)=\poly(\ell)$ bits.
Put $x=\log L$. Since $d\ge x/3-1$, we have
\[
 c(\mathcal F_\ell)\le\frac9{x-3},\qquad
 \frac{10}{x}-\frac9{x-3}
 =\frac{x-30}{x(x-3)}\ge\frac1{2x}
 \quad(x\ge57).
\]
Apply Theorem~\ref{thm:regional} with $p_{\Yes}=0$ and
$\eta=1/(2\log L)$. All its list, advice, and enumeration costs are polynomial
in $\ell$.
\end{proof}

The same constructions with the weaker second-moment scale estimate $5$
give $5/d$ and $19/\log L$, respectively. Those bounds remain valid;
Lemma~\ref{lem:scalesum} improves their constants without changing the gadget.
The constant $10$ is not claimed optimal.
The witness-length implication $p_{\mathrm{wit}}(n)\ge10/\log n$ follows
because the constructed arities satisfy $n\le L$ and tend to infinity.

\subsection{Exactly checked small dimensions}
For dimension weights $a_s\ge0$ with $\sum_s a_s=1$, give each affine
$s$-flat weight $a_s/A_s$. The quantities $c,g,c'$ then use this weighted
average. Rational weights are realized exactly by repetitions; they need
not be rounded. In Table~\ref{tab:small}, the $c$ columns require success
only on the one-NO class, while the $\rho$ columns also use success on the
all-YES class. The parenthesized witness count is needed only for the latter.
\begin{table}[htbp]
\centering\small
\renewcommand{\arraystretch}{1.2}
\begin{tabular}{@{}ccllll@{}}
\toprule
$d$ & Witness counts & $c$, uniform & $c$, optimized & $\rho$, uniform & $\rho$, optimized\\
\midrule
2 & $1,2\;(4)$ & $5/8$ & $3/5$ & $1/2$ & $1/2$\\
3 & $1,2,4\;(8)$ & $19/42$ & $96/229$ & $76/195$ & $72/191$\\
4 & $1,2,4,8\;(16)$ & $589/1680$ & $219/674$ & $2356/7501$ & $280/929$\\
\bottomrule
\end{tabular}
\caption{Exact regional thresholds for flats of dimensions $0,\ldots,d-1$.
Uniform means $a_s=1/d$; each optimized column optimizes its own dimension
weights.}
\label{tab:small}
\end{table}

These values were independently checked by enumerating all $2^{2^d}$ retained
tag sets. For $d=2,3,4$, the respective counts of flats in the included
dimensions are $(4,6)$, $(8,28,14)$, and $(16,120,140,30)$.
The optimized values have rational primal and dual certificates, included
with the accompanying verification code. They are not merely floating-point
linear-program outputs. None of the asymptotic theorems depends on these
finite computations.

\section{Sharpness and the limit of regional counting}
\label{sec:sharpness}
\begin{proposition}[Tag-restriction transformations]
\label{prop:tagrestrictions}
Fix a family $\mathcal F$ with $d=O(\log L)$ and a polynomial-size sampler
for a distribution $\mu$ on subsets of $\F_2^d$, represented by truth tables.
Apply the filter $[t\in T]$, for $T\leftarrow\mu$, to the first $d$ variables
of an input circuit, returning the input unchanged if it has fewer than $d$
variables. This is a polynomial-size pruning transformation on every input.
On a canonical one-NO gadget with NO member in position $j$, its success is
$\Pr_{\mu}[|T\cap K_j|=1]$; on an all-YES gadget, it is
$\Pr_{\mu}[|T|=1]$.
\end{proposition}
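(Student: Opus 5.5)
The proof checks three things in order: the transformation is polynomial size, it prunes on every input and every seed, and its success probabilities on the two gadget classes are as stated. Since $d=O(\log L)$, a truth table of a subset $T\subseteq\F_2^d$ has $2^d=\poly(L)$ bits, so the assumed polynomial-size sampler produces it. The output circuit is $H(z)\wedge[(z_1,\ldots,z_d)\in T]$, where membership is computed by a multiplexer over the truth table of size $O(2^d)$. When the input has fewer than $d$ variables, the output is $H$ itself. The output description length is therefore polynomial in $L$, and the ordered variables are unchanged. Pruning is immediate in both cases: the output is either the conjunction of $H$ with an extra predicate or $H$ itself. No promise on $H$ is used.

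For success, I use the ordering convention of Definition~\ref{def:taggadget}: the first $d$ ordered variables of $H_{\mathcal F}(Q)$ are exactly the tag bits $t$. This holds for every gadget of the family, because its arity $n=d+R\ell\ge d$. The filter therefore retains exactly the satisfying assignments whose tag lies in $T$, so $D(H;T)=\tau^{-1}(T)\cap\Sol(H)$.

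On a one-NO gadget with NO member $E_j$, Lemma~\ref{lem:tagsets}(b) gives $\Sol(H)=R_j(Q)$, and $\tau$ maps this set bijectively onto $K_j$, since each tag in $K_j$ has the unique completion $\chi_j(t)$. Hence $|D|=|T\cap K_j|$, and the success probability is $\Pr_\mu[|T\cap K_j|=1]$. On an all-YES gadget, Lemma~\ref{lem:tagsets}(a) makes $\tau:S_Q\to\F_2^d$ a bijection, so $|D|=|T|$ and success is $\Pr_\mu[|T|=1]$.

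I expect the main obstacle to be the bookkeeping of the canonical encoding rather than any probability. The argument needs three facts: the tag variables come first in the fixed ordered variable list, including unused tag bits; padding to $L(\ell)$ adds no variables, as stated after Definition~\ref{def:taggadget}; and the transformation, which knows only $d$, applies the same filter at every description length. The bijection statements themselves are already proved in Lemma~\ref{lem:tagsets}.
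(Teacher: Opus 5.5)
Your proposal is correct and follows essentially the paper's argument. Conjoining a truth-table filter on the first $d$ variables prunes on every seed and has size polynomial in $L$ because $2^d=\poly(L)$. On the gadget classes, those $d$ variables are the tag, and Lemma~\ref{lem:tagsets} makes the tag projection a bijection of the satisfying set onto $K_j$ (one-NO) or onto $\F_2^d$ (all-YES), which gives exactly the two stated success probabilities.
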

\begin{proof}
Conjoining a predicate always prunes. A truth table on $d$ bits gives a
circuit of size $O(d2^d)$, which is polynomial in $L$.
On the gadget classes, the first $d$ variables are the tag, and
Lemma~\ref{lem:tagsets} makes the tag projection bijective on the full
satisfying set. This gives the two probabilities. Recognition of a syntactic
gadget is unnecessary; the success guarantees, but not pruning, are restricted
to the gadget classes.
\end{proof}

\begin{corollary}[Sharpness laws on the canonical gadget classes]
\label{cor:tagsharpness}
The following ideal tag laws attain the indicated regional thresholds.
\begin{enumerate}[label=(\roman*),leftmargin=2em]
\item For the all-lines family of Proposition~\ref{prop:lines}, choose a
uniform point with probability $\rho_N$ and a uniform affine hyperplane
otherwise. Success equals $\rho_N$ on every one-NO and every all-YES gadget.
\item For the points-and-lines family of Proposition~\ref{prop:pointsandlines},
choose a uniform $2$-subset with probability $3/5$ and a uniform $3$-subset
otherwise. Success equals $3/5$ on every one-NO gadget.
\item For the same family, choosing a uniform singleton or a uniform
$3$-subset, each with probability $1/2$, gives success exactly $1/2$ on every
one-NO and every all-YES gadget.
\end{enumerate}
The laws in (i) and (ii) have bounded-fair-coin implementations with success
at least the displayed threshold minus any prescribed inverse-polynomial
$\delta(L)>0$, and polynomial overhead. Law (iii) is exactly implementable
with bounded fair coins.
No claim of these guarantees on every affine circuit with the same witness
count is made.
\end{corollary}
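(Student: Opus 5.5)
The plan is to invoke Proposition~\ref{prop:tagrestrictions}: for each of the three tag laws, the success on a one-NO gadget with NO member $j$ is $\Pr_\mu[|T\cap K_j|=1]$, and on an all-YES gadget it is $\Pr_\mu[|T|=1]$. Each claim therefore reduces to a finite computation of intersection probabilities for a fixed region under the given law. By affine symmetry, the law of $T$ is invariant under the affine group of $\F_2^d$, which acts transitively on flats of each dimension. So the probability only needs to be computed for one representative region of each dimension occurring in the family.

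\textbf{(i) All lines.} Let $K$ be a line $\{a,b\}$.
\begin{itemize}
\item A uniform point $T$ meets $K$ exactly once with probability $2/N$.
\item A uniform affine hyperplane $T$ meets $K$ once exactly when it separates $a$ and $b$. Among the $2(N-1)$ hyperplanes (direction spaces times two translates), the separating ones are those whose direction space excludes $a+b$. There are $N/2$ such direction spaces, each with both translates separating, giving probability $N/(2(N-1))$.
\end{itemize}
The one-NO success is therefore $\rho_N g+(1-\rho_N)c'$. The all-YES success is $\rho_N$, since hyperplanes have $N/2\ge2$ points. Setting these equal is exactly the defining equation $\rho_N=c'/(1+c'-g)$ from Remark~\ref{rem:disjoint}, whose solution is the value in Proposition~\ref{prop:lines}.

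\textbf{(ii) and (iii) Points and lines of $\F_2^2$.}
\begin{itemize}
\item A uniform $u$-subset meets a fixed point once with probability $u/4$.
\item It meets a fixed line $\{a,b\}$ once with probability equal to the fraction of $u$-subsets containing exactly one of $a,b$. This is $4/6$ for $u=2$ and $1/2$ for $u=3$, and $1/2$ for $u=1$.
\end{itemize}
For (ii), the point regions give $\tfrac35\cdot\tfrac12+\tfrac25\cdot\tfrac34=3/5$ and the line regions give $\tfrac35\cdot\tfrac23+\tfrac25\cdot\tfrac12=3/5$. For (iii), points give $\tfrac12(\tfrac14+\tfrac34)=1/2$ and lines give $\tfrac12(\tfrac12+\tfrac12)=1/2$. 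The all-YES success is $1/2$, namely the singleton probability.

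\textbf{Implementability.}
\begin{itemize}
\item Law (iii) needs a fair coin and a uniform choice among four singletons or four $3$-subsets. All of these are powers of two, so fair coins suffice exactly.
\item Laws (i) and (ii) involve the probabilities $\rho_N$ and $3/5$, uniform hyperplanes among $2(N-1)$, and uniform $2$-subsets among $6$. None of these is dyadic in general.
\item I would approximate each such choice by rejection sampling truncated after $k$ rounds, or by rounding the probabilities to dyadics of precision $2^{-k}$. This changes the law by total variation at most $2^{-\Omega(k)}$, so success changes by at most that amount. Taking $k=O(\log(1/\delta))$ gives polynomial overhead.
\item The truncated fallback must still output a legal tag set. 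Pruning holds on every seed anyway, by Proposition~\ref{prop:tagrestrictions}.
\end{itemize}

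The main obstacle is bookkeeping: making sure the law of $T$ is truly affine-invariant. Uniform point, uniform hyperplane, and uniform $u$-subset all are, so one representative per region dimension suffices. The separation count for hyperplanes in (i) is the only step needing care, and it is handled by the direction-space argument above.
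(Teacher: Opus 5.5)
Your proposal is correct and follows essentially the same route as the paper. You reduce via Proposition~\ref{prop:tagrestrictions} to tag-set intersection probabilities, compute the hyperplane separation probability $N/(2(N-1))$ and the subset probabilities $1/2,3/4,2/3,1/2$, note that $\rho_N$ solves $\rho=\rho g+(1-\rho)c'$, and handle the non-dyadic choices by truncated rejection sampling with a total-variation bound. The affine-symmetry reduction you add is harmless but unnecessary, since you compute the probabilities directly for an arbitrary point and line, as the paper does.
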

\begin{proof}
For (i), a uniform affine hyperplane separates the endpoints $a,b$ of a fixed
line exactly when its nonzero linear form takes value one on $a\oplus b$.
This has probability $N/(2(N-1))=c'$. The hyperplane has $N/2\ge2$ points,
so it does not isolate an all-YES gadget. A uniform point lies on the line
with probability $2/N=g$ and always isolates the all-YES gadget.
Thus the line success is $(1-\rho_N)c'+\rho_Ng=\rho_N$, as is all-YES success.

For (ii), a fixed point lies in a uniform $2$-subset with probability $1/2$
and in a uniform $3$-subset with probability $3/4$. A fixed line meets these
sets once with probabilities $2/3$ and $1/2$, respectively. Both weighted
averages are $3/5$.
For (iii), point success is $(1/4+3/4)/2=1/2$, line success is
$(1/2+1/2)/2=1/2$, and the full tag set is isolated precisely on the
singleton branch. One branch bit and two bits selecting a tag implement this
law exactly, using the selected tag either as the singleton or as the point
omitted from the $3$-subset.

The ideal mixtures in (i) and (ii) use non-dyadic probabilities.
To implement them in the finite-coin model, approximate each finite choice
by bounded rejection sampling, with total variation error at most $\delta$.
For example, drawing a uniform integer from a finite interval succeeds per
trial with probability greater than $1/2$; $O(\log(1/\delta))$ trials make
fallback mass at most $\delta$, after allocating the error budget among the
constant number of choices. The intervals and outcome descriptions have
polynomial bit length, and the tag truth tables have polynomial length.
Total variation bounds every success-probability change, and conjoining the
sampled tag filter preserves pruning on all seeds, including fallback seeds.
Thus the ideal thresholds are approached to arbitrary inverse-polynomial
accuracy. Exact finite-coin success probabilities are dyadic, so non-dyadic
values such as $3/5$ are not asserted as exact circuit probabilities.
\end{proof}

\paragraph{The original disjoint-region sharpness example.}
For $m$ disjoint nonempty regions, choose a point $z_i$ in each. With
probability $s$, retain a uniform singleton from these $m$ points; otherwise
retain all $m$ points. The global singleton fraction is $s$ and every
regional singleton probability is $1-s+s/m$. At $s=m/(2m-1)$ these agree,
so the strict threshold test cannot use this counting inequality to go below
that value. This is an exact statement about subset laws. Its finite-coin
interpretation follows the same approximation convention as above. The
multiscale families evade the example because their regions are not disjoint
and their $c$ is uniformly small.

\begin{lemma}[A floor for regional counting]
\label{lem:regionalfloor}
For a nonempty family with $s_{\max}<d$,
\[
 c(\mathcal F)\ge\frac1{4(s_{\max}+1)},\qquad
 \rho(\mathcal F):=\max\left\{g,\frac{c'}{1+c'-g}\right\}
 \ge\frac1{4(s_{\max}+2)}.
\]
\end{lemma}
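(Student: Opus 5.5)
We obtain the floor by averaging: if a random tag set $T$ drawn from a suitable law $\mu$ has small expected value of $f_{\mathcal F}(T)$, then some $T$ must do at least as well. The natural law is the affine-hashing filter of Lemma~\ref{lem:zerohash}, transplanted to tag space. Take $n=d$ and the dimension bound $s_{\max}$. Choose $i$ uniformly in $\{0,\ldots,s_{\max}\}$ and a uniform linear map $h:\F_2^d\to\F_2^i$, and put $T=h^{-1}(0)$. Every region $K$ in $\mathcal F$ is a nonempty affine subspace of dimension at most $s_{\max}$. Lemma~\ref{lem:zerohash} therefore gives $\Pr[|T\cap K|=1]\ge1/(4(s_{\max}+1))$. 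Averaging over the regions with their multiplicities yields $\mathbb E_\mu f_{\mathcal F}(T)\ge1/(4(s_{\max}+1))$, so some $T$ attains $f_{\mathcal F}(T)\ge1/(4(s_{\max}+1))$. Since $c$ is the maximum of $f_{\mathcal F}$ over all $T$, covering singleton sets via $g$ and all other sets via $c'$, this proves the first inequality.

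For $\rho$ we split into cases. If $g\ge1/(4(s_{\max}+2))$ we are done, so assume $g$ is smaller. When $g<c'$, the expression $c'/(1+c'-g)$ is increasing in $c'$. It is also at least $c'/2$, since $c'-g\le1$. If $c'$ is the larger of the two, then $c'=c\ge1/(4(s_{\max}+1))$, and $c'/(1+c'-g)\ge c'/(1+c')$. With $c'\ge a:=1/(4(s_{\max}+1))$ this gives at least $a/(1+a)=1/(4s_{\max}+5)$, which is at least $1/(4(s_{\max}+2))$. If instead $c'\le g$, then $c=g$, and $g\ge a$ already exceeds the claimed bound. In both cases the second inequality follows.

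This bound is a bit loose. A cleaner version mixes the law $\mu$ with the event $i=0$ when $T$ is all of $\F_2^d$, which is where the extra $+1$ comes from. I would first try the direct algebra above, since it appears to give the stronger constant $1/(4s_{\max}+5)$ without any mixing.

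The main obstacle is the hypothesis $s_{\max}<d$. Lemma~\ref{lem:zerohash} requires a dimension bound at most the ambient dimension $d$, which holds because $s_{\max}<d$. The regions having dimension at most $s_{\max}$ is exactly what the hashing lemma needs. I would also check the case $i=0$: there $T=\F_2^d$, which isolates only point regions, and this is consistent with the lemma. No finite-coin issue arises, because the statement concerns exact subset laws and the conclusion follows from existence via averaging.
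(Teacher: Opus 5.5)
Your proof is correct. The first inequality is the paper's own averaging argument. For the second inequality you take a different and shorter route.

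\textbf{Your route.} You use the pointwise inequality $\rho(\mathcal F)\ge c/(1+c)$. If $c'\ge g$, then $c'/(1+c'-g)\ge c'/(1+c')$ because $g\ge0$; otherwise $\rho\ge g=c$. Monotonicity of $x\mapsto x/(1+x)$ and the first inequality then give $\rho\ge 1/(4s_{\max}+5)$. This is stronger than the stated $1/(4(s_{\max}+2))$.

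\textbf{The paper's route.} The paper reruns the averaging with $i$ uniform on $\{0,\ldots,s_{\max}\}\cup\{d\}$. One sampled law then has global singleton probability $a\ge q$ and expected regional frequency at least $q$. The paper contradicts $\rho<q$ via $\mathbb E f_{\mathcal F}(T)\le ag+(1-a)c'\le\max\{g,qg+(1-q)c'\}$.

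\textbf{Comparison.} Your algebra gives brevity and a better constant. The paper's argument gives one explicit ideal tag law that succeeds with probability at least $q$ on every region and on the whole tag space at once. In the sense of Proposition~\ref{prop:tagrestrictions}, that is a concrete symmetric witness for the floor on both gadget classes, not just a bound on the scalar $\rho$.

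\textbf{Two side remarks need correcting.}
First, the paper's extra $+1$ comes from adding $i=d$, where an invertible $h$ makes $T$ a singleton. It does not come from $i=0$, which is already in your range and gives $T=\F_2^d$.

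Second, Lemma~\ref{lem:zerohash} needs only $s_{\max}\le d$, so it does not use the strict hypothesis $s_{\max}<d$. Its real role here is well-definedness of $\rho$. A point of a region together with a point outside it forms a two-point $T$ with $f_{\mathcal F}(T)>0$, so $c'>0$ and $1+c'-g>0$. You should state this, because your case $c'<g$ uses $\rho\ge g$ without checking that $\rho$ is defined (the degenerate case $g=1$, $c'=0$ must be excluded).
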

\begin{proof}
For the first bound, choose $i$ uniformly from $\{0,\ldots,s_{\max}\}$,
choose a uniform linear map $h:\F_2^d\to\F_2^i$, and set $T=h^{-1}(0)$.
For each fixed $K_j$, conditioning on $i=\dim K_j$ gives a singleton
intersection with probability at least $1/4$, by Lemma~\ref{lem:zerohash}.
Hence $\mathbb E f_{\mathcal F}(T)\ge1/[4(s_{\max}+1)]$, and the maximum
is at least this expectation.

For the second bound, choose $i$ uniformly from the set
$\{0,\ldots,s_{\max}\}\cup\{d\}$, which has $s_{\max}+2$ members.
With $q=1/[4(s_{\max}+2)]$, the same argument gives
$\mathbb E f_{\mathcal F}(T)\ge q$, and the extra choice $i=d$ gives
$a:=\Pr[|T|=1]\ge q$ by invertibility. The denominator $1+c'-g$ is positive:
a nonempty proper region has one point inside and one outside, whose pair
shows $c'>0$, while $g\le1$.
If $\rho(\mathcal F)<q$, then both $g<q$ and
$qg+(1-q)c'<q$. On the other hand, partitioning the sampled sets into
singletons and nonsingletons yields
\[
 \mathbb E f_{\mathcal F}(T)
 \le ag+(1-a)c'
 \le\max\{g,qg+(1-q)c'\}<q,
\]
contradicting the lower bound. The ideal finite choices here are used only in
an extremal averaging argument, not as exact bounded-coin algorithms.
\end{proof}

Since the tester enumerates a region, polynomial cost requires
$s_{\max}=O(\log L)$. The lemma places an $\Omega(1/\log L)$ floor on the
scalar thresholds supplied by this regional count, either with no all-YES
success promise or with the same scalar guarantee on both classes.
Theorem~\ref{thm:logarithmic} matches that order. The constants hidden in this
method comparison can depend on the polynomial resource bound.
This does not assert the same floor for asymmetric promises that impose a
much larger all-YES success probability.

\begin{remark}[What this method does not settle]
\label{rem:remaining}
In the witness-length convention, affine hashing achieves the $\Theta(1/n)$
scale, whereas $p_{\mathrm{wit}}(n)\ge10/\log n$ implies the collapse.
The results here do not settle the intervening scales or all inverse-logarithmic
leading constants. Improving this specific regional-counting argument to
$o(1/\log L)$ would require abandoning at least one ingredient of its stated
framework, such as explicit enumeration of the tested region or the scalar
extremal bound. The lemma does not prove that every possible complexity
reduction must use high-dimensional inputs, nor does it rule out other
gadgets, statistics, or advice constructions.
\end{remark}

\section{Relation to selectors, membership comparability, and DKMW}
\label{sec:poolliterature}
DKMW's two-circuit comparison~\cite[Theorems 3.3--3.4]{DKMW13} yields a
collapse from success above $2/3$ on singleton and two-point inputs.
The earlier all-pairs pool extends its disjoint-region count to
$(k+1)/(2k+1)$. The tagged construction does not simply sharpen that count:
it changes the region geometry while preserving one common canonical
presentation over every designation. Theorem~\ref{thm:constants} addresses
every positive constant in DKMW's intermediate range, and
Theorem~\ref{thm:logarithmic} gives an inverse-logarithmic threshold without a
cryptographic assumption. The success promise is restricted to affine inputs;
therefore the implications also apply to procedures promised to isolate every
satisfiable circuit. They do not resolve the separate odd-count isolation
question.

The generic conversion from local comparisons to polynomial nonuniform advice
is established prior work. Ko~\cite{Ko83} gives the foundational domination
method. Naik, Rogers, Royer, and Selman~\cite[Section 2, Lemma 4]{NRRS98}
use polynomial families of small advice sets together with their membership
witnesses, attribute the multiway domination lemma to Ogihara~\cite{Ogihara96},
and describe the greedy averaging-and-deletion principle. The present proof
rederives that conversion for the tagged promise tester, including the exact
sampling bijection and the resource bounds when the number of regions grows.
The contribution is the tagged affine realization and the multiscale
isolation consequence, not a new generic selector-domination theorem.

Membership comparability supplies related but distinct hypotheses.
Ogihara~\cite{Ogihara95} studies algorithms excluding one possible vector of
membership bits; Sivakumar~\cite{Sivakumar99} shows that logarithmic membership
comparability of SAT yields a polynomial-time algorithm for the
unique-versus-zero promise. The present argument constructs no such comparator
and does not import either result as a collapse theorem. In particular, it
does not rely on the general polynomial-arity membership-comparability circuit
claim whose retraction is reported by Beigel, Fortnow, and Pavan~\cite[Section
1.1]{BFP}. All promise-transfer and domination steps used here are proved
locally.

The multiscale estimate is the affine-flag argument recorded in~\cite{BD26}.
Its new use here is within a canonical, syntax-dependent tagged-gadget
comparison: unlike presentation-invariant isolation, the argument never
identifies the output laws of different equivalent circuit descriptions.

\clearpage
\phantomsection

\end{document}